\numberwithin{equation}{section}
\theoremstyle{plain}
\newtheorem{theorem}{Theorem}[section]
\newtheorem{proposition}[theorem]{Proposition}
\newtheorem{corollary}[theorem]{Corollary}
\newtheorem{definition}[theorem]{Definition}
\newtheorem{remark}{Remark}[section]
\newtheorem{example}[theorem]{Example}
\def\R{\mathbb{R}}
\def\SPD{\text{SPD}}
\def\sym{\text{Sym}}
\def\t{\textrm}
\def\bb{\mathbb}
\def\R{\mathbb{R}}
\def\bb{\mathbb}
\title{Extrinsic Gaussian processes  for  regression and classification on manifolds}
\author{Lizhen Lin, Mu Niu, Pokman  Cheung,  and David B. Dunson}
\address{Department of Applied and Computational Mathematics and Statistics, The University of Notre Dame,  USA}
	      \email{lizhen.lin@nd.edu}
    \address{School of Mathematics and Statistics,  University of Glasgow, United Kingdom }
\address{Department of Mathematics,  Hong Kong University of Science and Technology, China}
              \address{Department of Statistical Science, Duke University, USA}
\begin{document}






\begin{abstract}

Gaussian processes (GPs) are very widely used for modeling of unknown functions or surfaces in applications ranging from regression to classification to spatial processes.  Although there is an increasingly vast literature on applications, methods, theory and algorithms related to GPs, the overwhelming majority of this literature focuses on the case in which the input domain corresponds to a Euclidean space.  However, particularly in recent years with the increasing collection of complex data, it is commonly the case that the input domain does not have such a simple form.  For example, it is common for the inputs to be restricted to a non-Euclidean manifold, a case which forms the motivation for this article.  In particular, we propose a general extrinsic framework for GP modeling on manifolds, which relies on embedding of the manifold into a Euclidean space and then constructing extrinsic kernels for GPs on their images.  These extrinsic Gaussian processes (eGPs) are used as prior distributions for unknown functions in Bayesian inferences. Our approach is simple and general, and we show that the eGPs inherit fine theoretical properties from GP models in Euclidean spaces. We consider applications of our models to regression and classification problems with predictors lying in a large class of manifolds, including spheres, planar shape spaces, a space of positive definite matrices, and  Grassmannians.  Our models can be readily used by practitioners in biological sciences for various regression and classification problems, such as disease diagnosis  or detection. Our work is also likely to have impact in spatial statistics when spatial locations are on the sphere or other geometric spaces.

\textbf{Keywords:}
Extrinsic Gaussian Process (eGP);  Manifold-valued Predictors;  Neuro-imaging; Regression on Manifolds

\end{abstract}

\maketitle

\section{Introduction}

Over the past few decades, Gaussian process (GP) models have emerged as  very powerful tools in many problems of statistics and machine learning.  In particular,  GP  models have been widely used in  regression and classification, in which a  Gaussian process is  used as the  prior distribution for the regression function or the latent function of a classification map.
 GP models are particularly appealing in their ability to accurately quantify uncertainty in estimation and prediction.  
 \cite{rasmussen_gaussian_2005} provide  an overview on GPs in machine learning. \cite{vandervaart2008, vandervaart2009}  develop  theoretical guarantees of GP models in terms of support and posterior asymptotic theory.  However,  few attempts have been made in developing applicable GP models for \emph{regression and classifications on manifolds} except for some very special cases,  such as the 2-dimensional sphere \citep{Hitczenko2012211, guinness2016isotropic}.

 One of the paramount challenges in developing GP  models on manifolds  is constructing  \emph{valid covariance kernels}.   \cite{casmillo} develop an elegant framework for intrinsic GP models on Riemannian manifolds by rescaling  solutions of heat equations,  but the constructed intrinsic kernels are often impractical to implement.   We provide a general and simple solution by first embedding manifolds into Euclidean spaces via \emph{equivariant embeddings}, which are embeddings that preserve a great deal of the geometry of the manifolds, and then \emph{constructing extrinsic kernels} on the image manifold. We refer to the resulting GPs as \emph{extrinsic GPs (eGPs)}.  eGPs are shown to inherit  appealing properties of GPs defined on Euclidean spaces, and they adapt to the dimension of the manifolds instead of the dimension of  the ambient space where the manifolds are embedded onto.  Another appealing feature of eGPs is their ease of implementation for inference.

One of the motivations for developing GP models on manifolds is the ubiquity of modern data that are represented in various non-conventional forms. In neuroimaging, the diffusion matrices in diffusion tensor imaging (DTI) are $3\times3$ \emph{positive definite matrices} \citep{dti-ref}. In engineering and machine learning, pictures or images are often  preprocessed or reduced to a collection of \emph{subspaces} \citep{subspacepaper, facialsubpace}.
 In machine vision and medical diagnostics, a digital image can also be represented by a set of $k$-landmarks, the collection of which form  \emph{landmark-based shape spaces} \citep{kendall84}.  Other  common examples include  \emph{orthonormal frames} \citep{vecdata}, \emph{surfaces}, \emph{curves}, and \emph{networks}.  Most of the above examples can be described as \emph{manifolds}, which are locally Euclidean spaces with smooth structures.

 There are growing needs and  practical motivations for studying regression and classification with predictors on known manifolds. For instance, in medical imaging, a common goal  is to  reliably predict disease status using DTI data or landmark-based digital images. This can be viewed as a classification problem with manifold-valued inputs or predictors.   One example is diagnosis of Attention Deficit Hyperactivity Disorder (ADHD) in children based on DTI. There are also many applications in which it is of interest to relate manifold-valued predictors to quantitative traits. One such case is the study of how intelligence quotient relates to the shape contours of certain brain areas (such as the Hippocampus \citep{hippo}). The shape can be represented by a set of landmarks on the boundary of the contours, the collection of which form a shape manifold. Without valid models and appropriate inferential methods for regression and classification on manifolds, making accurate inferences and predictions in the above applications and related settings will remain difficult.


There is already a rich literature on statistical inference for manifold-valued data consisting of i.i.d measurements. Much of this literature focuses on inference on the location and spread of manifold-valued data  \citep{rabi2003, Bhattacharya2005, linclt}. Some model based methods  have also been proposed  \citep{Bhattacharya01122010, sinica-paper, Pelletier2005297}.  However, regression or classification problems with predictors on manifolds have received much less attention. \cite{david-regression} proposed a framework for regression and classification on manifolds by  modeling the joint distribution  of covariate and response variables $(x,y)$  using a Dirichlet process mixture of product kernels.  This joint model induces a nonparametric model for the conditional distribution of $y$ given $x$ with which one can infer the regression/classification function. However, the practical performance of these models is often unsatisfactory  as the cluster allocations are driven too much by the marginal distribution of $x$,  a nuisance parameter.

Our work focuses on regression and classification on \emph{known manifolds}.  There is, however,  an important  line of work in manifold learning, where the predictors concentrate around some \emph{unknown lower-dimensional manifold} but are observed in an often higher-dimensional ambient space. The lower-dimensional geometry is often learnt first via dimension reduction tools, based on which a regression model is built (see, e.g., \cite{yen13}).  An  interesting exception is due to \cite{yang2016} in which they show that by imposing a Gaussian process prior on the regression function with a covariance kernel defined directly on the ambient  space,  the  posterior distribution yields a posterior contraction rate depending on the intrinsic dimension of the manifold.  They assume that the unknown lower-dimensional space where the predictors center around are a class of \emph{submanifolds} of Euclidean space.  Many interesting manifolds do not naturally arise as sub-manifolds; in particular, those given as quotient manifolds;  projective shape spaces,  planar shapes, 3-$D$ shapes, affine shapes and many other  manifolds arising as quotient spaces of spheres.  Our framework first embeds the manifold onto the Euclidean space  via some often non-trivial embeddings and  then defines eGPs on the image of the manifolds (including submanifolds as  special cases with the embedding given by the identity map).


The paper is organized as follows.  Sections 2 introduces eGP models.  In section 3, we illustrate the broad utility of eGP models by applying them to a large class of regression/classification problems with predictors lying on various manifolds.   Section 4 is devoted to studying the properties of eGP models in terms of mean squared differentiability and posterior contraction rates. Our paper ends with a discussion.
\section{Regression and classification on manifolds}

Let $M$ be a smooth manifold where the predictors lie.   Given data $(x_i,y_i)$ with $x_i\in M$ and $y_i\in \R$ ($i=1,\ldots, n$), assume the following  regression model
\begin{align}
\label{eq-model1}
y_i=F(x_i)+\epsilon_i
\end{align}
where $F: M\rightarrow \R$ is the regression function on $M$. Here $\epsilon_i$'s are some independent errors which determine  the likelihood of the regression model. The goal is to develop statistical models for inference on the regression function $F(x)$. If $y$ is categorical or binary (0 or 1), then $F(x)=E(y\mid x)$ is called a \emph{classification map}.


We focus on Bayesian inference on $F$. Let $\Pi(F)$ be a prior distribution for $F$, which updates with the data to produce a posterior distribution, based on which inference is carried out.
 We  denote the posterior distribution by  $\Pi(F|D)$,  where $D=\{(x_1,y_1),\ldots, (x_n,y_n)\}$ is the data.  A Gaussian process (GP), which can be viewed as a probability distribution on the space of functions, is one of the most popular candidates for a nonparametric prior for the regression function.  The popularity of GP is due to its simple representation, tractability, flexibility for modeling and  appealing theoretical properties.
We proceed to   propose a general extrinsic framework for constructing \emph{GPs on manifolds}.




The usual definition of a GP in a Euclidean space generalizes to a manifold $M$. A stochastic process $w(x)$ indexed by $x\in M$ is a \emph{Gaussian process on $M$} if its evaluation at any finite number of points on $M$ follows a multivariate Gaussian distribution. Specifically, we say $w(x)$ is a GP with mean function $\mu(x)$ and covariance kernel $K(\cdot,\cdot)$ if for any $x_1,\ldots,x_n\in M$,
\begin{align*}
&(w(x_1),\ldots, w(x_n))\;\sim\; N\left((\mu(x_1),\ldots, \mu(x_n)), \Sigma\right),\\
&\qquad\textrm{where }\Sigma_{ij}=cov\left(w(x_i), w(x_j)\right)=K(x_i,x_j).
\end{align*}
Notice that $K:M\times M\rightarrow \R$ is a \emph{positive semi-definite kernel} on $M$. Namely, for any points $x_1,\ldots,x_n$ on $M$ and real numbers $a_1,\ldots,a_n$,
\begin{align}
\label{eq-kernel}
\sum_{i=1}^n\sum_{j=1}^n a_i a_j K(x_i,x_j)\geq 0.
\end{align}
The fundamental difficulty in imposing a GP prior on a manifold stems from the highly challenging task of constructing a valid covariance kernel $K(\cdot,\cdot)$.
Below we describe a simple recipe for constructing valid covariance kernels using an extrinsic approach.


Let $J: M\rightarrow\bb{R}^D$ be an embedding of $M$ into some higher dimensional Euclidean space $\bb{R}^D$ ($D\geq\dim M$) and denote the image of the embedding as $\widetilde{M}=J(M)$. By definition of an embedding, $J$ is a smooth map such that its differential at each point $x\in M$ is an injective map (from the tangent space of $M$ at $x$ to the tangent space of $\bb{R}^D$ at $J(x)$), and $J$ is a homeomorphism between $M$ and its image $\widetilde{M}$.
Given a positive semi-definite kernel $\widetilde{K}$ on $\bb{R}^D$, we can then define a positive semi-definite kernel  (and hence the covariance kernel of a GP) on $M$ by
\begin{align}
\label{eq-extrinsiker}
K_{ext}(x_1,x_2)=\widetilde{K}(J(x_1),J(x_2)).
\end{align}
Indeed, $K_{ext}$ satisfies condition \eqref{eq-kernel} on $M$ because $\widetilde{K}$ satisfies the same condition on $\bb{R}^D$, hence in particular on $\widetilde{M}\subset\bb{R}^D$. We call the Gaussian process with the covariance kernel $K_{ext}(\cdot,\cdot)$ defined above an \emph{extrinsic Gaussian process (eGP)}.

\begin{remark}
Let $||\cdot||$ be the Euclidean norm.
We define the \emph{extrinsic distance} on the manifold $M$ as
\begin{align}
\label{eq-extrinsicdis}
\rho(x_1,x_2)=\| J(x_1)-J(x_2)\|.
\end{align}
One can immediately generalize the popular squared exponential kernel in Euclidean spaces to manifolds by letting
\begin{align}
\label{eq-extrinsiker2}
K_{ext}(x_1,x_2)=\alpha\exp(-\beta \rho^2(x_1,x_2)),
\end{align}
where $\rho(x_1,x_2)$ is the extrinsic distance given in \eqref{eq-extrinsicdis}.
One can also generalize the class of Mat\'ern covariance kernels to manifolds by letting
\begin{align}
\label{eq-extrinmearn}
K_{ext}(x_1,x_2)=\sigma^2\frac{1}{\Gamma(\nu)2^{\nu-1}}\left(\frac{\sqrt{2\nu}\rho(x_1,x_2)}{\kappa}\right)^{\nu}K_{\nu}\left(\frac{\sqrt{2\nu}\rho(x_1,x_2)}{\kappa}\right),
\end{align}
where $\Gamma(\nu)$ is the Gamma function,  $K_{\nu}$ is the modified Bessel function of the second kind, and $\kappa$ and $\nu$ are non-negative parameters of the covariance. Mat\'ern covariance kernels are often used in spatial statistics with which one can easily control the smoothness of the sample paths with parameter $\nu$.
The following is clear.
\end{remark}

\begin{proposition}
The kernels given in \eqref{eq-extrinsiker2} and \eqref{eq-extrinmearn} are positive semi-definite kernels on $M$.
\end{proposition}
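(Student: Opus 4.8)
The plan is to reduce the claim to the pullback principle already established in the paragraph preceding \eqref{eq-extrinsiker}, so that the only remaining content is the classical positive semi-definiteness of the squared exponential and Mat\'ern kernels on Euclidean space. First I would note that both kernels in \eqref{eq-extrinsiker2} and \eqref{eq-extrinmearn} depend on the pair $(x_1,x_2)$ only through the extrinsic distance $\rho(x_1,x_2)=\|J(x_1)-J(x_2)\|$ from \eqref{eq-extrinsicdis}. Accordingly I would define companion kernels $\widetilde{K}$ on $\mathbb{R}^D\times\mathbb{R}^D$ by replacing $\rho(x_1,x_2)$ with $\|u-v\|$, namely $\widetilde{K}(u,v)=\alpha\exp(-\beta\|u-v\|^2)$ in the first case and the corresponding isotropic Mat\'ern function of $\|u-v\|$ in the second. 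By construction these satisfy $K_{ext}(x_1,x_2)=\widetilde{K}(J(x_1),J(x_2))$, exactly the form \eqref{eq-extrinsiker}; hence once each $\widetilde{K}$ is shown to be positive semi-definite on $\mathbb{R}^D$, the pullback argument gives condition \eqref{eq-kernel} for $K_{ext}$ on $M$.

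The main work is therefore to verify positive semi-definiteness of the two Euclidean kernels. Both are continuous and stationary (indeed isotropic) functions of $u-v$, so I would invoke Bochner's theorem: such a function is positive semi-definite precisely when it is the Fourier transform of a finite non-negative measure. For $\widetilde{K}(u,v)=\alpha\exp(-\beta\|u-v\|^2)$ the spectral measure is, up to the positive factor $\alpha$, a Gaussian density, which is non-negative, so the kernel is positive semi-definite. For the Mat\'ern kernel the spectral density is proportional to $\bigl(2\nu/\kappa^2+\|\omega\|^2\bigr)^{-(\nu+D/2)}$, which is strictly positive and (since $\nu>0$) integrable, so Bochner again yields positive semi-definiteness. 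For the Gaussian case one may alternatively avoid Bochner entirely: writing $\exp(-\beta\|u-v\|^2)=e^{-\beta\|u\|^2}e^{-\beta\|v\|^2}\exp(2\beta\,u\cdot v)$ and expanding the last factor as a power series, positive semi-definiteness follows from the facts that the inner product $u\cdot v$ is a positive semi-definite kernel and that sums, products, and multiplication by a factor of the form $f(u)f(v)$ all preserve this property.

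Because the embedding has been handled abstractly through \eqref{eq-extrinsiker}, there is essentially no genuine obstacle here; the statement is a direct corollary of that construction together with textbook spectral facts, which is why the authors remark that ``the following is clear.'' The only point I would take care to check is that the reduction is valid for \emph{every} ambient dimension $D$ into which $M$ may be embedded: this holds because the Gaussian spectral measure is finite in all dimensions and the Mat\'ern spectral density is integrable whenever $\nu>0$, so positive semi-definiteness of $\widetilde{K}$ on $\mathbb{R}^D$, and hence of $K_{ext}$ on $M$, is insensitive to the choice of embedding.
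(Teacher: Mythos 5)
Your proof is correct and follows exactly the route the paper intends: the paper offers no separate argument beyond the remark ``The following is clear,'' relying precisely on the pullback construction \eqref{eq-extrinsiker} together with the classical positive semi-definiteness of the squared exponential and Mat\'ern kernels on $\mathbb{R}^D$, which you have simply spelled out via Bochner's theorem. Your additional check that the argument is uniform in the ambient dimension $D$ is a sensible (if routine) refinement of what the paper leaves implicit.
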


\begin{figure}[h]
\begin{center}
\includegraphics[width=11cm]{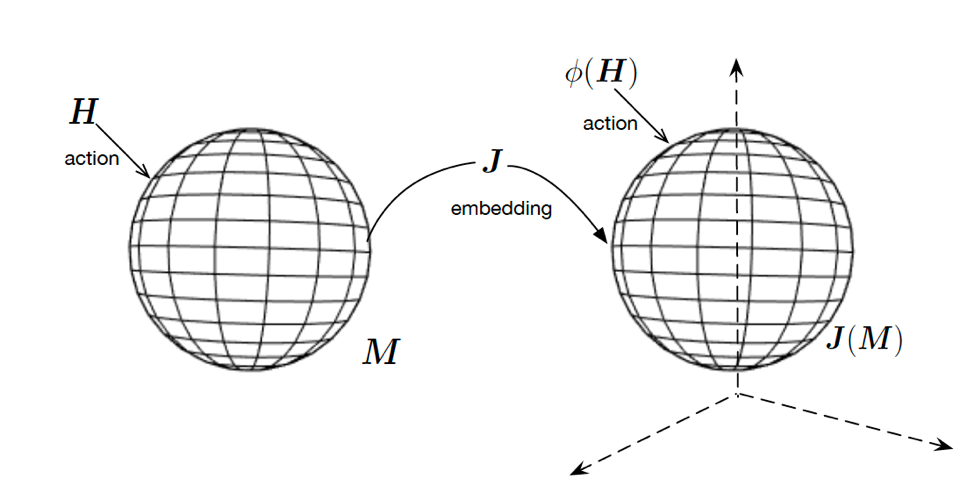}
\end{center}
\caption{An simple illustration of equivariant embeddings}
\label{fig-embed}
\end{figure}

\begin{remark}

The embedding $J$ is never unique. It is desirable to have  an embedding that preserves as much geometry as possible.  An \emph{equivariant embedding}  is one type of embedding that preserves a substantial amount of geometry.  Figure \ref{fig-embed} provides a visual illustration.
 Suppose $M$ admits an action of a (usually `large') Lie group $H$. Then we say that $J$ is an equivariant embedding if we can find a Lie group homomorphism $\phi: H\rightarrow GL(D, \mathbb R)$ from $H$ to the general linear group $GL(D, \mathbb R)$ of degree $D$ acting on $\widetilde M$ such that
\begin{align*}
J(hp)=\phi(h)J(p)
\end{align*}
for any $h\in H$ and $p\in M$.  The definition seems technical, however, the intuition is clear: if a large group $H$ acts on  the manifolds such as by rotation before embedding, such an action can be preserved via $\phi$  on the image $\widetilde M$. Therefore, the embedding is geometry-preserving in this sense. 
\end{remark}
%

\begin{remark}

The extrinsic method described above has some advantages over using intrinsically defined covariance kernels. In particular, intrinsic kernels are difficult to construct in general.  For example, the squared exponential kernel  $\alpha\exp(-\beta \rho_g^2(x_1,x_2))$ with $\rho_g$ given by the geodesic or intrinsic distance is in general not a valid kernel. Explicit examples have been found for very special manifolds only, such as spheres. At the same time, simulation tests have shown that there is no significant difference in statistical performance between certain extrinsic and intrinsic models, at least for the example of spheres. However, intrinsic methods are often much more computationally complex and expensive.

\end{remark}

With a valid covariance kernel on $M$, one can specify an eGP as a  prior  $\Pi(F)$  and carry out inference in a Bayesian framework.
Given the regression model in \eqref{eq-model1}, we assume that $\epsilon_i\sim N(0,\sigma^2)$, where the parameter $\sigma^2$ has a prior distribution $\pi_{\sigma^2}$ such as the inverse gamma distribution.  The prior distribution for the regression function  $\Pi (F)$ will be given by the eGP with the covariance kernel in \eqref{eq-extrinsiker}.
The posterior distribution is given by
\begin{align}
\Pi\left(U\mid (x_1,y_1),\ldots, (x_n,y_n)\right)=\dfrac{\int_{U} \prod_{i=1}^n N(y_i;F(x_i),\sigma^2)\pi_{\sigma^2}\Pi(dF)}{\int \prod_{i=1}^n N(y_i;F(x_i),\sigma^2)\pi_{\sigma^2}\Pi(dF)}
\end{align}
where $U$ is a measurable set in the product space $\mathcal{M}\times (0,\infty)$ with $\mathcal{M}$ denoting the space  of all $ M \to \R$   regression functions.

Another important class of problems are classification problems, in which one is generally interested in predicting a categorical (e.g., binary as a special case) outcome given the predictors. Denote the responses or outcomes as 1 or 0 for the binary case, and let $F(x)$ be the probability of observing 1 at predictor level $x$. One can impose a prior distribution on $F$ by imposing an eGP on a latent process $w(x)$,  such that $F(x) = L(w(x))$ and $L$ is a fixed link function - for example the probit or logistic link. Properties of $F(x)$ can be derived from those for $w(x)$ as $L$ provides a smooth one-to-one monotone transformation of $w(x)$ into $L(x)$. Extensions to categorical outcomes beyond binary are straightforward.

\section{Examples}

To  illustrate the broad utility of eGP models, we consider a large class of examples with predictors lying on manifolds including spheres,  planar shapes, positive definite matrices, and Grassmannians.  All details of the embeddings are provided for constructing the extrinsic kernels for eGPs.  Embedding manifolds into Euclidean spaces or other manifolds has been  applied in different settings. In \cite{brian2014}, for example, the manifold of the parameters of a statistical model is embedded into a big sphere, while  \cite{ext2015} embeds the response manifold of a regression model into a Euclidean space for inference.  In section 3.1, a simulation study is carried out to compare the performances of an eGP model with that of  an intrinsic one in a regression model with predictors on a sphere.  In section 3.2, an eGP model is applied to classify gender of gorillas based on skull images. In this case, the predictor space is the 2-$d$  landmark-based shape space, i.e., the planar shape. In Section 3.3, we consider a classification problem whose predictors are positive definite matrices; this problem  has  important applications in neuro-imaging.  We apply the eGP model to an HIV study in identifying the most sensitive sites for disease detection or diagnostics. Lastly in section 3.4, we apply our eGP model to a regression problem with predictors lying on a Grassmannian manifold in a simulation study.


\subsection{Spheres}
Modeling on the sphere has received particular attention  due to  applications in spatial statistics; for example, global models for climate or satellite data \citep{jun2008, chunfeng}.  We consider eGP models for regression  with the predictors lying on a  sphere $S^d$.   The model is illustrated with predictors on $S^2$.  Note that for the particular case of spheres, there is a somewhat extensive literature studying valid positive-definite functions or covariance functions on the spheres for various purposes (see. e.g., \cite{gneiting2013} and \cite{Du2013}). 

To construct a valid extrinsic covariance kernel on $S^d$, first note that $S^d$ is a submanifold of $\R^{d+1}$, so that the inclusion map $J$ serves as a natural embedding of $S^d$ into $\mathbb R^{d+1}$.  It is easy to check that $J$ is an equivariant  embedding with  respect to the Lie group $H=SO(d+1)$, the group of $d+1$ by $d+1$ special orthogonal matrices. Intuitively speaking, this embedding preserves a lot of  symmetries of the sphere.


One can adopt the  extrinsic squared exponential kernel (\ref{eq-extrinsiker}) on $S^d$ for an eGP model, with
\begin{align}
K_{\t{ext}}(x,x')&\nonumber=\alpha\exp\left(-\beta\|J(x)-J(x')\|^2\right)=\alpha\exp\left(-\beta\|x-x'\|^2\right).
\end{align}

We now consider a simulation study in which the performance of an eGP model is  compared with that of a  GP model using an intrinsic kernel. Intrinsic kernels that are computation  friendly are only available for some special cases such as $S^1$ and $S^2$.  We compare our extrinsic  model to  a GP model  with the following intrinsic kernel. Letting  $d(x,x')=2\arcsin\big(\frac{1}{2}\|x-x'\|\big)$, define
\begin{align}
K_{\t{int}}(x,x')=\alpha\exp\big(-\beta d(x,x')\big),
\end{align}
which is a valid  covariance kernel on a sphere (e.g, see section 3 of  \cite{chunfeng}).

 Data are simulated from the regression model,
\begin{align}
y= F(x_1,x_2,x_3) +\epsilon
\end{align}
where $x$ is a point on the unit sphere,  $x_{1:3}$ are the coordinates of $x$ in the three dimensional Euclidean space, the true regression function $F$ is taken to be  the sum of $x_{1:3}$ and $\epsilon$ is a zero mean Gaussian noise term.   We apply a GP model with covariance kernels $K_{int}$ and $K_{ext}$.
Since the kernel parameters ($\theta=\{\alpha,\beta\}$) are correlated \citep{Rasmussen2004}, standard Markov Chain Monte Carlo (MCMC) sampling traverses the parameter space slowly. Instead, we use Hamiltonian Monte Carlo (HMC) for  inference of kernel parameters which improves efficiency by producing relatively 
distant proposals that are accepted with high probability  \citep{duane1987}.  Here are some details on the priors and the HMC chains:  both the length-scale and magnitude hyperparameters of the covariance kernels of the eGP are given gamma(10,10) priors;  $\pi_{\sigma^2}$ is given by gamma(1,10); the number of Monte Carlo iterations is 10,000 with a burn in of  1,000;  The results are not sensitive to different parameter values of the gamma distributions.

%
%
%

Two kernels are tested using $100$ samples with signal-to-noise ratio $26db$.  The true function is plotted in red and the estimate is plotted in blue in Figures \ref{fig1}.  The horizontal axis is the Euclidean coordinate $x_1$ and the vertical axis is the functional output.
The eGP model appears to produce an estimate that is closer to the true function compare to that from the intrinsic model.  Indeed, the eGP model using the kernel $K_{ext}$ yields a smaller root mean square error, which is $0.063$ compared to $0.3727$ for the intrinsic model. One of the potential reasons for  superior performance of eGP over the intrinsic model is  non-differentiability of the intrinsic distance hence  intrinsic kernel.  This non-differentiability can lead to non-smoothness of the Gaussian process (see section 4.1 for more details) thus impacting inference results.

\begin{figure}[!h]
\centering
 \includegraphics[width=0.4\linewidth]{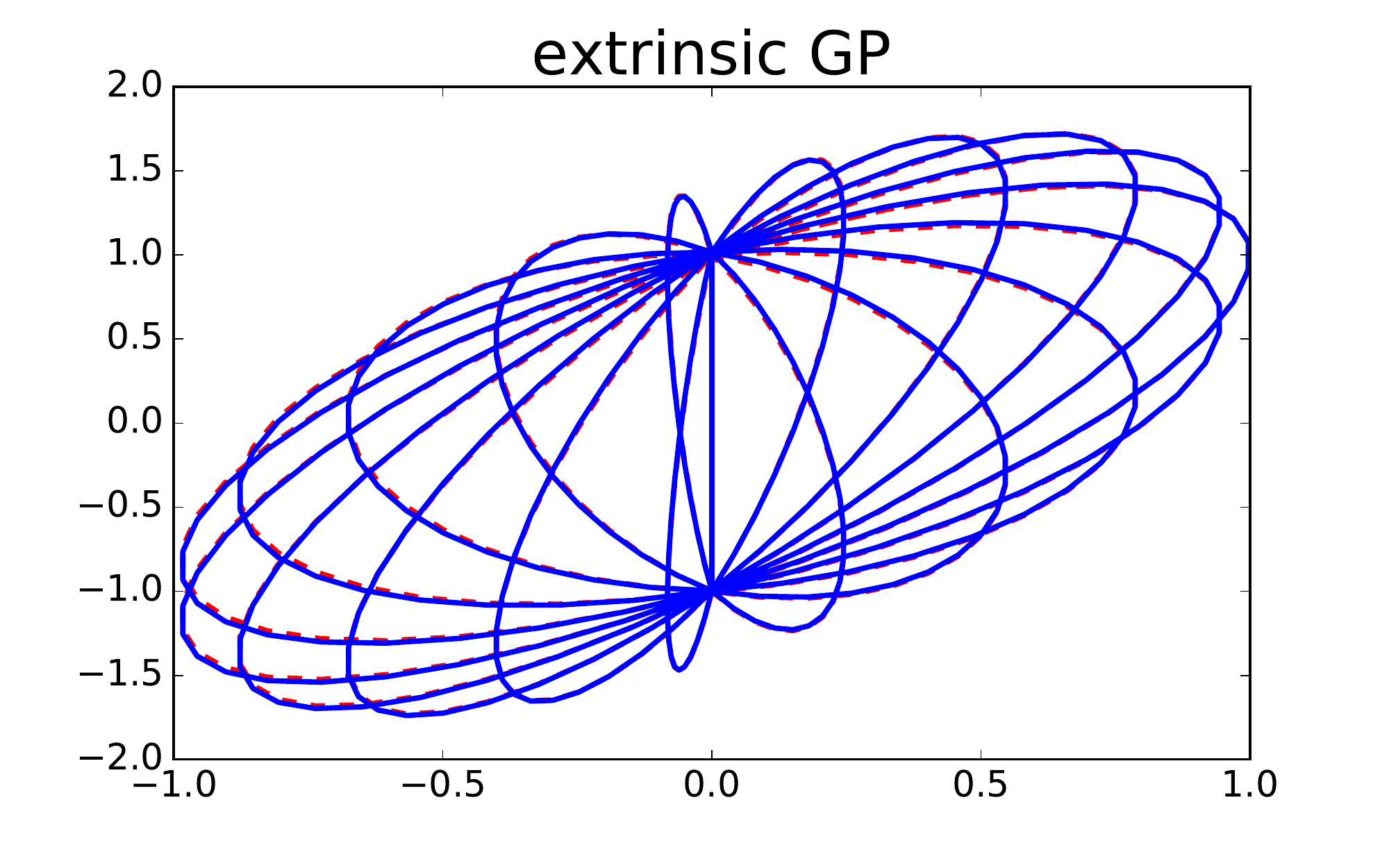}
  \includegraphics[width=0.4\linewidth]{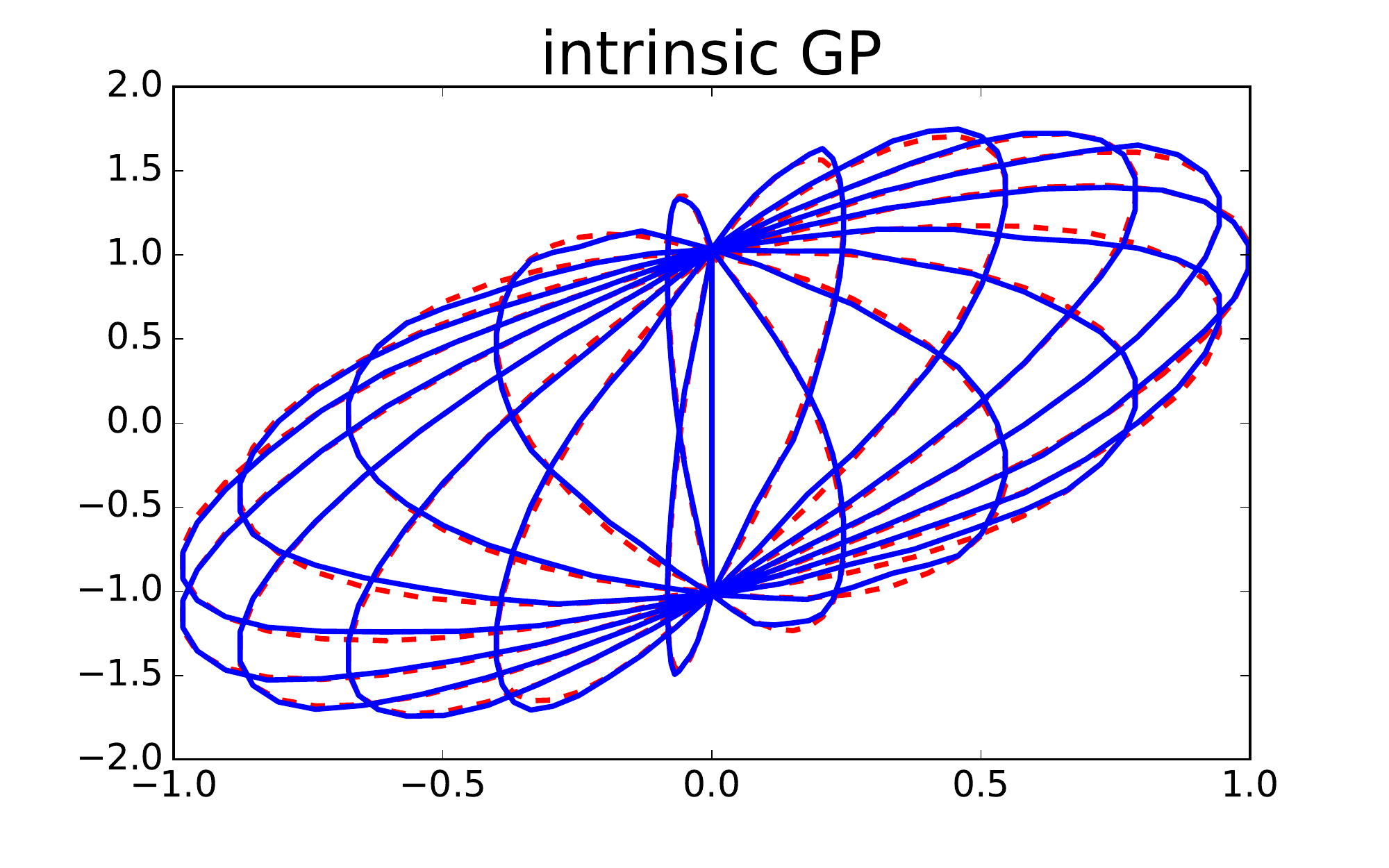}
     \caption{$\text{GP}$ predictive results using spherical exponential kernel vs eGP with an extrinsic kernel.  Truth is shown in red dashed lines and posterior mean estimates in blue.}
  \label{fig1}
\end{figure}



\subsection{Landmark-based shape spaces $\Sigma_2^k$}
We now apply eGP models to regression and classification on planar shapes. Planar shape spaces are one of the most important classes of landmark-based shape spaces with wide applications in biology and medical imaging. Such spaces were first studied in \cite{kendall77},  and in the pioneering work of \cite{books1} motivated by applications to biological shapes.


We first describe planar shapes. Let $z=(z_1,\ldots, z_k)$, with $z_1,\ldots, z_k\in \R^2$, be a set of $k$ landmarks. The planar shape $\Sigma_2^k$ is the collection of $z$s modulo the Euclidean motions including translation, scaling and rotation. One has $\Sigma_2^k=S^{2k-3}/SO(2)$,  the quotient of sphere by the action of $SO(2)$ (or modulo the effect of rotation),  the group of $2\times 2$ special orthogonal matrices;

A point in $\Sigma_2^k$ can be identified as the orbit of some $u\in S^{2k-3}$, which we denote as $\sigma(z)$. Viewing $z$ as a vector of complex numbers, one can embed $\Sigma_2^k$ into $S(k,\mathbb C)$, the space of $k\times k$ complex Hermitian matrices, via the Veronese-Whitney embedding (see e.g. \cite{rabimono}):
\begin{equation}
\label{eq-planaremb}
J(\sigma(z))=uu^*=((u_i\bar{u}_j))_{1\leq i,j\leq k}.
\end{equation}
One can verify that $J$ is equivariant (see \cite{kendall84}) with respect to the Lie group
$$H=SU(k)=\{A\in GL(k, \mathbb C), AA^*=I, \det(A)=I\},$$
with its action on $\Sigma_2^k$ induced by left multiplication.
This embedding $J$ will be used to construct covariance kernels for eGPs
on $\Sigma_2^k$.

As an example, we apply an eGP to a classification problem with predictors on $\Sigma_2^k$. We aim to classify the gorilla skull images from \cite{dimk}, which are represented as planar shapes with 8 landmarks, by gender. A binary GP  classification model is developed using 59 gorilla skull images. We take $y_i \in \{ 0,1 \}$, where $0$ represents a  female  and $1$  a male. 

We have the following model:
\begin{align}
y_i \sim Bernoulli(\pi_i), \;\;\;\pi_i=\Phi(F(x_i)),\;\; F(.) \sim \mbox{GP}(0,\mathcal{K}_{ext}),
\end{align}
where $\Phi$ is the standard normal cdf.


 Following \cite{williams1996} and \cite{neal2012}, we used Hamiltonian Monte Carlo (HMC) method for posterior computation. The likelihood is approximated using Laplace's  method as in \cite{williams1998}. Gamma priors  are used on the kernel hyperparameters, with Gamma(0.5,2) for the  length-scale and Gamma(50,1) for the magnitude paramter.    The number of MCMC iterations is 10,000 with a burn in of  3,000; The HMC estimates of  the kernel parameters are  shown in Figure \ref{spherical exponential}.

\begin{figure}[ht]
  \includegraphics[width=6cm]{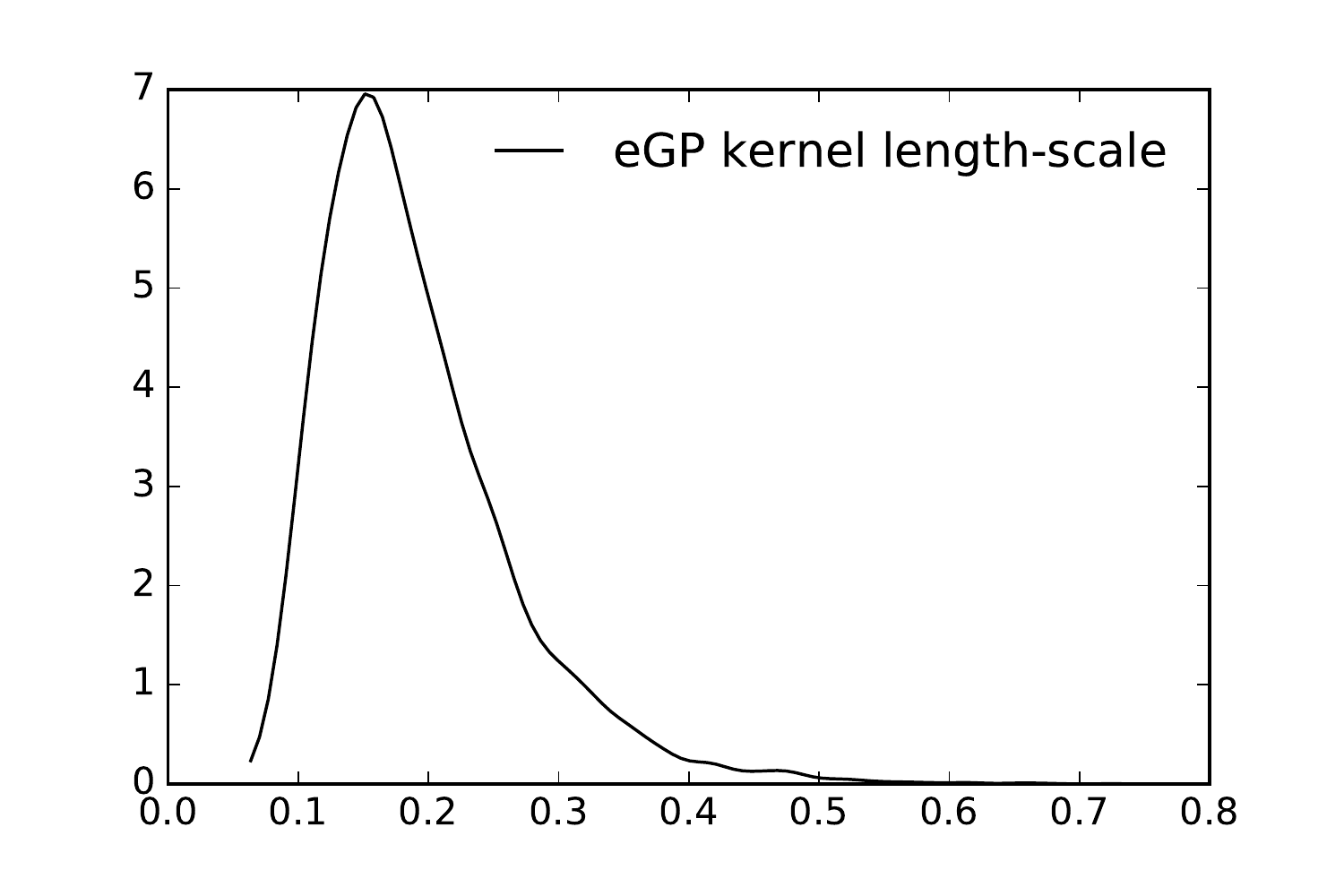}
    \includegraphics[width=6cm]{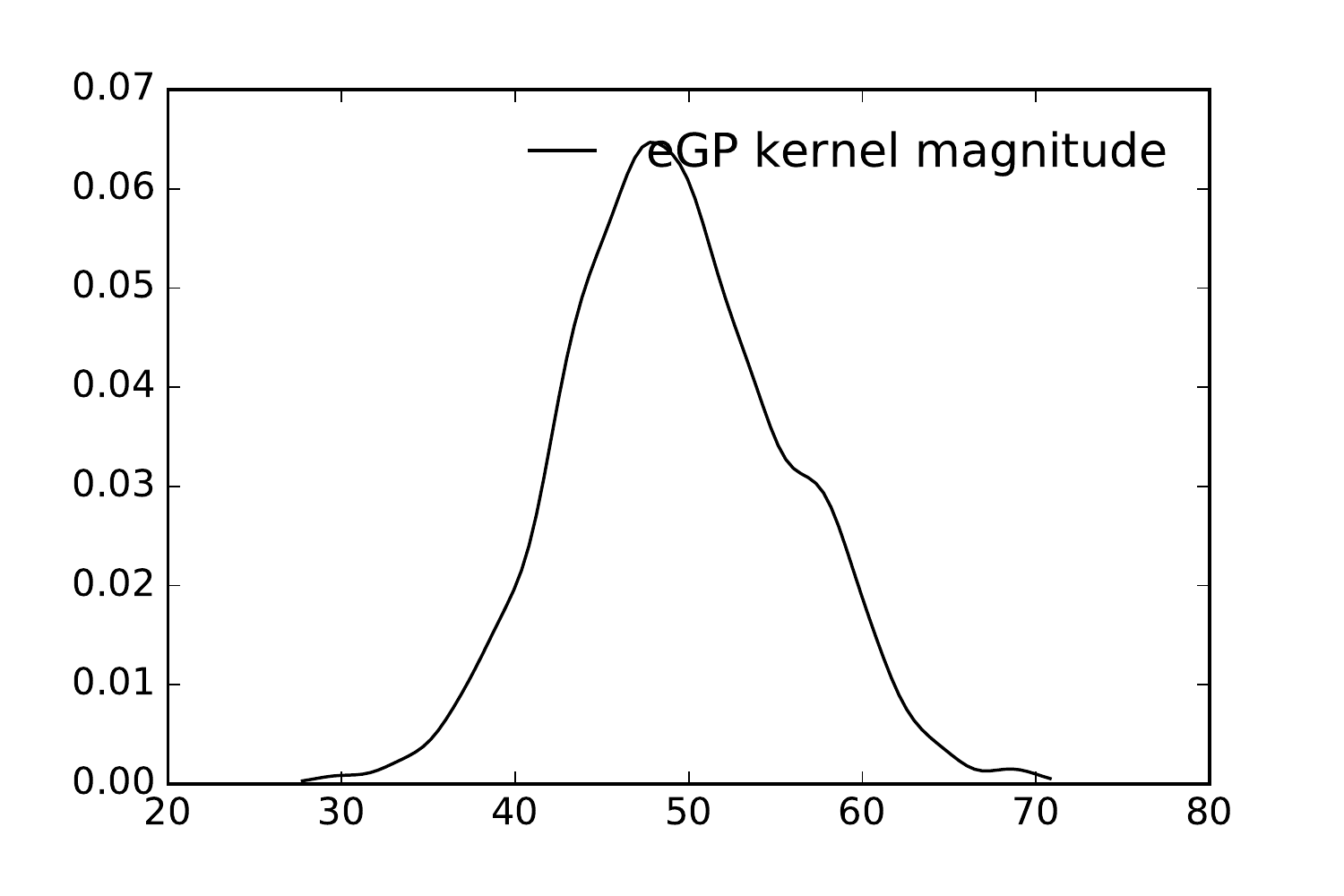}
  \caption{Posterior distributions of the eGP kernel parameters (the length-scale and  magnitude)}
  \label{spherical exponential}
\end{figure}


We use eight skull images  as testing data and all these images are successfully classified with our GP  classifier. The classification probabilities are provided in Table \ref{table:calcium}. The results are compared with a naive GP on the preshape data (modulo the effects of translation and scaling) without any embedding;  the latter completely failed at classification by  returning all the classification probabilities of 0.5. The results indicate that naive GPs are not suitable for complex manifolds not arising as submanifolds of an Euclidean space or  when simple representation of the space using Euclidean coordinates is not available.  In particular, for complex manifolds such as planar shapes, the naive representation of the data without properly incorporating the underlying geometry (e.g., via equivariant embeddings as in our case),  result in a posterior estimate of the latent function that is close to the prior mean (which is zero in our case) thus producing a classification probability of 0.5. 
\begin{table}[ht]
\caption{Planar shape classification  of gender based on gorilla skull shape.} 
\centering 
\begin{tabular}{c c c c c c c c c} 
\hline\hline 
Class& female & female & female  & female & male & male & male & male \\ [0.5ex] 
\hline 
GP  classification prob.& 7.2e-4 &  0.319 &0.029  &0.041 &0.96 &0.89 & 0.54 & 0.86 \\ [1ex] 
naive GP  classification prob.& 0.5 &  0.5 &0.5  &0.5 &0.5 &0.5 & 0.5 & 0.5 \\ [1ex]
\hline 
\end{tabular}
\label{table:calcium} 
\end{table}

\subsection{Diffusion tensor imaging and positive definite matrices}

Diffusion tensor imaging (DTI) is designed to measure the diffusion of water molecules in the brain; diffusion tends to be directional  along white matter tracks or fibers, corresponding to structural connections between brain regions along which substantial brain activity and communications occur. DTI data are now collected routinely in human studies, and there is abundant interest in using DTI to build better predictive models of cognitive traits and neuropsychiatric disorders.  The diffusion anisotropy characterized in terms of  \emph{diffusion matrices}, corresponding to $3\times 3$ positive definite matrices measured at each voxel in the brain.
We denote the space of all such matrices as $\SPD(3)$.

The space $\SPD(3)$ belongs to an important class of manifolds that possesses particular geometric structures, which should be taken into account in statistical analyses.
Our goal is to study the regression relationship between DTI-valued covariates and patient outcomes.


In order to carry out regression and classification on $SPD(3)$ using our eGP models, we need a nice embedding to construct the extrinsic kernels. There are a few natural embeddings of $\SPD(3)$ into Euclidean spaces. In particular, one can embed it into the space $\sym(3)$ of $3\times 3$ real symmetric matrices via the $\log$-map
\begin{align}
\label{eq-logmap}
\log: \SPD(3)\rightarrow \sym(3).
\end{align}
For $A\in\SPD(3)$ with a spectral decomposition (or diagonalization) $A=U\Lambda U^{-1}$, we have $\log(A)=U\log(\Lambda)U^{-1}$ where $\log(\Lambda)$ is the diagonal matrix whose diagonal entries are the logarithms of the diagonal entries of $\Lambda$. The embedding \eqref{eq-logmap} is in fact a diffeomorphism, and is equivariant with respect to the actions of $GL(3,\R)$, the $3\times 3$ general linear group, by conjugation. Indeed, for $h\in GL(3,\R)$, one has
\begin{align}
\log(hAh^{-1})=h\log(A)h^{-1}.
\end{align}
Given $A_1, A_2\in\SPD(3)$, their extrinsic distance under the embedding \eqref{eq-logmap} is given by
\begin{align}
\rho(A_1, A_2)=\|\log(A_1)-\log(A_2) \|,
\end{align}
where $\|\cdot \|$ denotes the Frobenius norm of matrices (i.e.~$\|A\|=\t{Tr}(AA^T)^{1/2}$). This extrinsic distance will be used to construct an eGP kernel in \eqref{eq-extrinsiker2}.

We now  consider a diffusion tensor imaging (DTI) data set  consisting of 46 subjects with 28 HIV+ subjects and 18 healthy controls.   Diffusion tensors were extracted along one atlas fiber tract of the splenium of the corpus callosum. The DTI data for all the subjects are  registered in the same atlas space based on arc lengths, with 75 tensors  obtained along the fiber tract of each subject. This data set has been studied in a regression setting in \cite{Yuan2012} and in the context of two sample testing (\cite{linclt}).  A  GP sampler is carried out between the  control group and the HIV+ group for each of the 75 sites  along the fiber tract. Therefore, 75  classifiers were run in total.   We aim to find out which sites of the splenium of the corpus callosum are most sensitive to influence by HIV.

14 subjects (six controls and eight HIV+) are used to test the HIV status classifiers ($0$ for healthy and $1$ for HIV+) using eGP models. A similar binary GP classification model is applied to the DTI data at each of  the prespecified 75 locations along the chosen tract. We have identified the top ten most sensitive sites indexed by the  arc length (location on the brain). The results are recorded in  Table \ref{table:grass}, which  shows the total number of correct GP  predictions of HIV status of the 14 tested subjects among the top ten sites.

\begin{table}[ht]
\caption{Diffusion tensor imaging results:  top 10 most sensitive sites to influence of HIV  } 
\centering 
\begin{tabular}{c c c c c c c c c c c} 
\hline\hline
arclength& $1.76$ & $4.42$ & $13.56$  & $26.52$ & $31.19$ & $33.16$  &$34.45$ & $35.62$ & $36.80$ & $37.11$ \\ [0.5ex] 
\hline
\# of correct GP  prediction & 11& 11 & 12 & 11 &11 &11&11 &11&12&11\\[1ex]
\hline
\end{tabular}
\label{table:grass}
\end{table}

Again HMC with Laplace approximation is used for model parameter inference. The posterior distribution of kernel hyperparameters for the GP classifier for one of the 75 sites along the fiber tract is shown in Figure \ref{PSDhmc}. Gamma(0.5,2) prior is used for kernel length-scale and Gamma(2.5,2) prior for kernel magnitude. The number of Monte Carlo iterations is 10,000 with a burn in of 3,000.  

\begin{figure}[ht]
  \includegraphics[width=6cm]{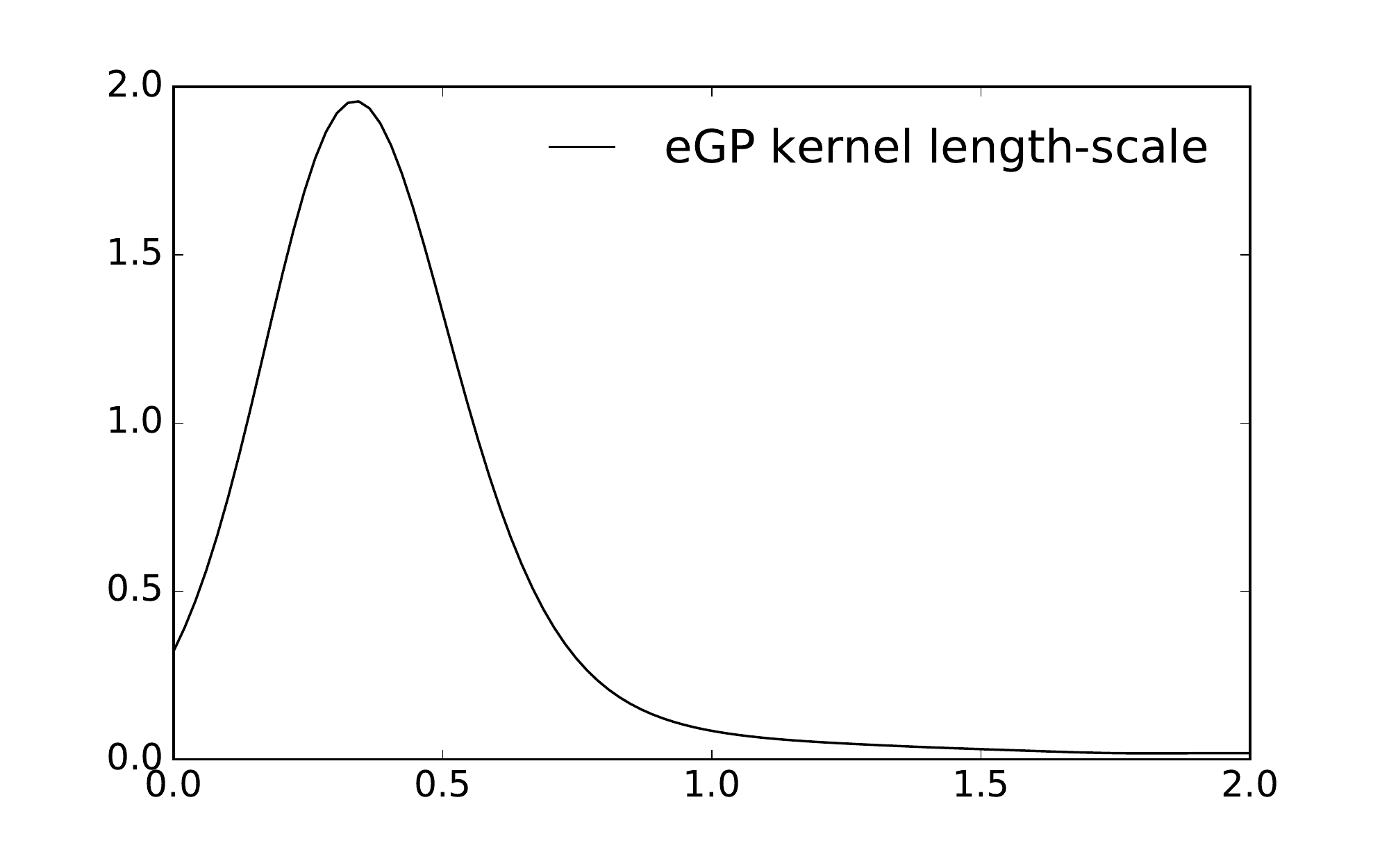}
    \includegraphics[width=6cm]{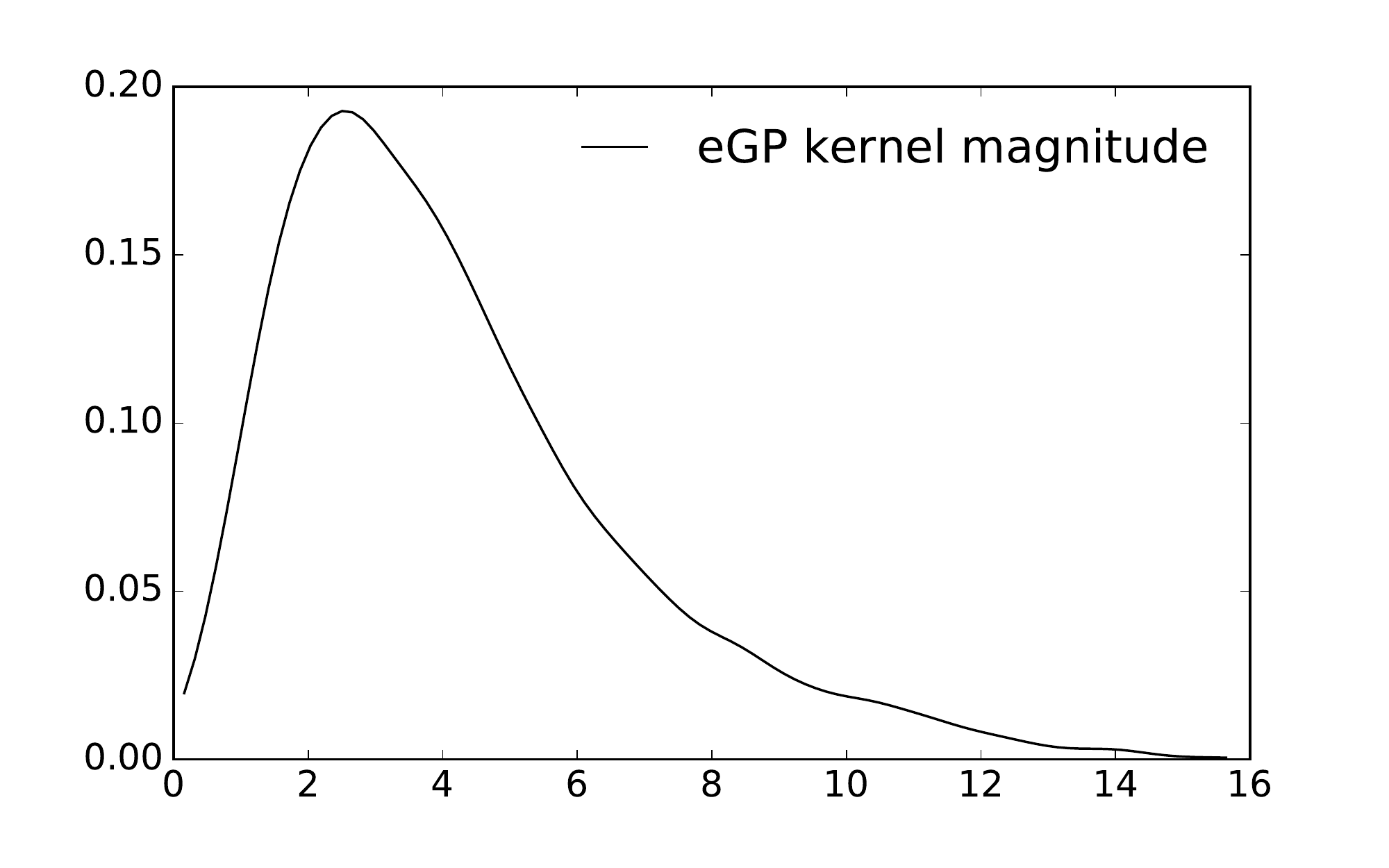}
  \caption{Posterior distribution for the eGP kernel covariance parameters in the diffusion tensor and HIV application.}
  \label{PSDhmc}
\end{figure}

\subsection{Stiefel manifolds and Grassmann manifolds (Grassmannians)}

We now consider regression and classification problems whose predictors lie on Stiefel or Grassmann manifolds. Given integers $m\geq k\geq 0$, the Stiefel manifold $V_k(\R^m)$ is the collection of all $k$-tuples of orthonormal vectors in $\R^m$, and the Grassmann manifold $Gr_k(\R^m)$ is the collection of all $k$-dimensional subspaces in $\R^m$.  Every $k$-tuple of orthonormal (hence linearly independent) vectors span a $k$-dimensional subspace, and every $k$-dimensional subspace is spanned by some $k$-tuple of orthonormal vectors. This means there is a surjective map $V_k(\R^m)\rightarrow Gr_k(\R^m)$. There is a natural action of $O(k)$, the group of $k\times k$ orthogonal matrices, on $V_k(\R^m)$ and any two $k$-tuples of orthonormal vectors span the same subspace precisely if they differ by an action of $O(k)$, which provides the identification $V_k(\R^m)/O(k)=Gr_k(\R^m)$. Grassmann manifolds have many applications in  signal processing and machine learning \citep{Kutyniok200964}.

There is an equivariant embedding of $Gr_k(\R^m)$ into a Euclidean space \citep{chikusebook}. Let $X\in V_k(\R^m)$ and $\sigma(X)=X\cdot O(k)$ be the $O(k)$-orbit of $X$ in $Gr_k(\R^m)=V_k(\mathbb R^m)/O(k)$. Note that
\begin{align*}
J(\sigma(X))=XX'
\end{align*}
defines an embedding $J$ of $Gr_k(\R^m)$ into the space of $m\times m$ matrices, which may be identified as $\R^{m^2}$. Also, it is equivariant with respect to the group $H=O(m)$ acting on $Gr_k(\R^m)$ via left multiplication on $\R^m$ and on $m\times m$ matrices by conjugation. Indeed, for $h\in H$, one has $J(h\sigma(X))=hXX'h'=\phi(h)J(\sigma(X))$, where $\phi(h)$ stands for conjugation by $h$.
Now the extrinsic distance between two points in $Gr_k(\R^m)$ is given by
$$\rho(\sigma(X_1),\sigma(X_2))=\| X_1X_1'-X_2X_2' \|,$$
where $\|\cdot\|$ is the Frobenius norm on matrices. We use the kernel \eqref{eq-extrinsiker2}.

\begin{remark}
The Stiefel manifold $V_k(\R^m)$ is naturally a submanifold of $\R^{m\times k}$ and the inclusion map is  an equivariant embedding.
\end{remark}

We now  apply the eGP model to data simulated from $ y= F(XX') + \epsilon$, 
where $X$ is an $m\times k$ matrix with $m=10$ the ambient dimension and $k=5$ the subspace dimension.
The data are simulated from the model with $F(X)=\beta XX' \beta$, where $\beta$ is some known vector.  We simulated 100, 200 and 300 training data points and additional 50 points for testing  with different signal-to-noise ratio levels. Table \ref{tab-grassmse} records the RMSE values.  As expected, the RMSE reduces with increasing training size and signal-to-noise ratio. 

\begin{table}[ht]
\caption{Simulation results for out-of-sample RMSE for prediction (for 50 testing points) based on predictors on the Grassmannian. } 
\centering 
\begin{tabular}{|c| c c c|  } 
\hline
\backslashbox{Training size}{Signal-to-noise ratio }& 10db & 20db &30db\\
\hline
$n=100$  & 1.25 & 0.6 &0.31\\ \hline
$n=200$  & 0.95& 0.31 & 0.098\\ \hline
$n=300$ & 0.77 &0.27 &0.089\\
\hline
\end{tabular}
\label{tab-grassmse}
\end{table}

%

The posterior distribution of kernel hyperparameters are estimated using HMC. A Gamma(2.5,2) prior is used for the kernel length-scale and Gamma(20,1) prior for the kernel magnitude.  The number of Monte Carlo iterations is 6000 with a burn in of 1000. 



\section{Properties of  eGPs }

In this section, we first study the properties of an eGP in terms of mean square differentiability. The smoothness of a stochastic process captures and quantifies the
intuition that inputs that are close (on a manifold) are likely to
produce similar output values.
Therefore, understanding the smoothness property is important for
interpolation and prediction.  In addition, we show that  (see Proposition \ref{prop-contract}) the posterior contraction rates of eGPs  are adaptive to the dimension of the underlying manifold instead of the ambient space where the manifolds are embedded onto building on results from \cite{yang2016}.
\subsection{Mean square differentiability}

We first give the definition of \emph{mean square differentiability and mean square derivative} of a stochastic process on a differentiable manifold.
Consider a smooth manifold $M$ and a stochastic process $w(x)$
indexed by $x\in M$.
Let $\mu(x)$ and $K(x_1,x_2)$ be the mean and covariance functions
of $w(x)$.

\begin{definition} $\phantom{a}$ (a) Let $x\in M$ and $v\in T_x M$.
Choose a smooth path $\gamma:(-\epsilon,\epsilon)\rightarrow M$
(for some $\epsilon>0$) such that $\gamma(0)=x$ and $\gamma'(0)=v$.
The stochastic process $w$ is {\bf mean squared (MS) differentiable}
\emph{at $x$ with respect to $v$} if, as $a\rightarrow 0$, the
random variable
\begin{align*}
\frac{w(\gamma(a))-w(x)}{a}
\end{align*}
converges to some limit $D_v w$ in mean squares, i.e.
\begin{align*}
\bb{E}\left[\left(\frac{w(\gamma(a))-w(x)}{a}-D_v w\right)^2\right]
\rightarrow 0.
\end{align*}
In this case, $D_v w$ is called the {\bf MS derivative}
\emph{of $w$ at $x$ with respect to $v$}.

(b) If $w$ is MS differentiable at $x$ with respect to every
tangent vector at that point, then we simply say that $w$ is
\emph{MS differentiable at $x$}.

(c) If $w$ is MS differentiable at every point in $M$, then we
simply say that $w$ is \emph{MS differentiable (in $M$)}.
In this case, for any tangent vector field $V$ in $M$, the random
variables $\{D_{V_x} w:x\in M\}$ constitute a stochastic process $D_V w$
in $M$, called the \emph{MS derivative of $w$ with respect to $V$}.
\end{definition}

\begin{remark}
The definition in (a) depends only on $x$ and $v$, but otherwise
not on the choice of $\gamma$.
This notion of MS differentiability generalizes the existing one
in Euclidean spaces.
\end{remark}

\begin{proposition} \label{MSDiff.cond}
If the mean function $\mu$ is differentiable at $x$ and the
covariance function $K$ is of class $C^2$ at $(x,x)$, then
the stochastic process $w$ is MS differentiable at $x$.
\end{proposition}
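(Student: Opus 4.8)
The plan is to reduce mean square differentiability to an $L^2$-Cauchy condition and then to the regularity of the covariance kernel restricted to the chosen path. Fix $v\in T_xM$ and a smooth path $\gamma$ with $\gamma(0)=x$ and $\gamma'(0)=v$, and pass to the centered process $\tilde w(y)=w(y)-\mu(y)$, which has mean zero and covariance $K$. The difference quotient then splits as
\begin{align*}
X_a:=\frac{w(\gamma(a))-w(x)}{a}=\frac{\mu(\gamma(a))-\mu(x)}{a}+\frac{\tilde w(\gamma(a))-\tilde w(x)}{a}=:D_a+\tilde X_a .
\end{align*}
The deterministic term $D_a$ converges to $d\mu_x(v)$ as $a\to 0$, since differentiability of $\mu$ at $x$ together with smoothness of $\gamma$ gives $\mu(\gamma(a))-\mu(x)=a\,d\mu_x(v)+o(a)$; this is the only step that uses the hypothesis on $\mu$, and isolating the mean here is precisely what lets us require merely pointwise differentiability of $\mu$ rather than the $C^2$ smoothness that the raw second moment $\mathbb{E}[w(y)w(z)]$ would demand. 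Because adding a convergent deterministic sequence preserves $L^2$ limits, it remains to show that $\tilde X_a$ converges in $L^2$.

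First I would handle $\tilde X_a$ by the completeness of $L^2$ rather than by exhibiting the limit directly: a mean-zero square-integrable family converges in $L^2$ as $a\to 0$ if and only if $\mathbb{E}[\tilde X_a\tilde X_b]$ tends to a finite limit as $a,b\to 0$ independently, since $\mathbb{E}[(\tilde X_a-\tilde X_b)^2]=\mathbb{E}[\tilde X_a^2]-2\,\mathbb{E}[\tilde X_a\tilde X_b]+\mathbb{E}[\tilde X_b^2]$ forces the Cauchy property once the cross-term converges. Using $\mathbb{E}[\tilde w(y)\tilde w(z)]=K(y,z)$, a direct expansion yields
\begin{align*}
\mathbb{E}[\tilde X_a\tilde X_b]=\frac{h(a,b)-h(a,0)-h(0,b)+h(0,0)}{ab},\qquad h(s,t):=K(\gamma(s),\gamma(t)),
\end{align*}
which is exactly a second-order mixed difference quotient of $h$ about the origin.

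The crux is then to show that this quotient converges to $\partial_s\partial_t h(0,0)$. Since $K$ is of class $C^2$ near $(x,x)$ and $\gamma$ is smooth, the chain rule makes $h$ of class $C^2$ on a neighborhood of the origin; writing the numerator as $\int_0^a\!\int_0^b \partial_s\partial_t h(s,t)\,dt\,ds$ and dividing by $ab$, continuity of the mixed partial forces the limit $\partial_s\partial_t h(0,0)$ as $a,b\to 0$. The Cauchy criterion then delivers an $L^2$ limit $D_v\tilde w$ of $\tilde X_a$, and combining with $D_a\to d\mu_x(v)$ gives the MS derivative $D_v w=d\mu_x(v)+D_v\tilde w$. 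As $v\in T_xM$ was arbitrary, $w$ is MS differentiable at $x$. I expect the main obstacle to be expository rather than substantive: justifying the Lo\`eve--Cauchy reduction so that the limit is obtained without being guessed in advance, and checking that the hypothesis ``$K$ of class $C^2$ at $(x,x)$'' transfers, through composition with the smooth $\gamma$, to genuine $C^2$ regularity of the two-real-variable function $h$, which is what the difference-quotient estimate actually consumes.
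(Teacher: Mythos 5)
Your proof is correct and takes essentially the same route as the paper's: after centering the process, you compute $\mathbb{E}[\tilde X_a\tilde X_b]$ as the second-order mixed difference quotient of $K$ composed with the path, use the $C^2$ hypothesis to obtain its joint limit as $(a,b)\to(0,0)$, and conclude via the $L^2$ Cauchy criterion and completeness, exactly as the paper does. The only differences are expository: you carry the mean term explicitly rather than assuming $\mu=0$ without loss of generality, and you justify the convergence of the difference quotient via the integral representation $\int_0^a\int_0^b \partial_s\partial_t h(s,t)\,dt\,ds$, a step the paper asserts directly.
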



\begin{proof}
Since $\mu$ is differentiable at $x$, the statement will hold
for $w$ if it also holds for $w-\mu$, whose mean function is $0$.
Hence we may assume $\mu=0$ without loss of generality.

Suppose $\gamma:(-\epsilon,\epsilon)\rightarrow M$ is a smooth
path with $\gamma(0)=x$ (for some $\epsilon>0$).
Let
\begin{align*}
v=\gamma'(0)\in T_x M,\qquad
v^{(1)}=(v,0),\; v^{(2)}=(0,v)\;\in\;
  T_{(x,x)}(M\times M)=T_x M\times T_x M\,.
\end{align*}
For $a\in(-\epsilon,0)\cup(0,\epsilon)$, consider the random
variable
\begin{align*}
D_a = \frac{w(\gamma(a))-w(x)}{a}\,.
\end{align*}
It suffices to show that $D_a$ has a limit in mean squares
(i.e.~in $L^2$) as $a\rightarrow 0$.
Notice that
\begin{align*}
\bb{E}[D_a D_b]
=\frac{1}{ab}\Big(
  K(\gamma(a),\gamma(b))-K(\gamma(a),x)-K(x,\gamma(b))+K(x,x)\Big)
\end{align*}
Since $K$ is of class $C^2$ at $(x,x)$, as
$(a,b)\rightarrow(0,0)$, we have
\begin{align*}
\bb{E}[D_a D_b]\rightarrow\big(D_{v^{(1)}}D_{v^{(2)}}K\big)(x,x).
\end{align*}
It follows that, under the same limit,
\begin{align*}
\bb{E}[(D_a-D_b)^2]
&=\bb{E}[D_a^2]+\bb{E}[D_b^2]-2\bb{E}[D_a D_b] \\
&\rightarrow
\big(D_{v^{(1)}}D_{v^{(2)}}K\big)(x,x)
+\big(D_{v^{(1)}}D_{v^{(2)}}K\big)(x,x)
-2\big(D_{v^{(1)}}D_{v^{(2)}}K\big)(x,x)
=0
\end{align*}
Therefore, as $a\rightarrow 0$, $D_a$ satisfies the Cauchy
condition with respect to the $L^2$ norm and, by completeness,
admits an $L^2$ limit.
\end{proof}

\begin{proposition} \label{MSDer.mean.cov}
If the mean function $\mu$ is differentiable in $M$ and the
covariance function $K$ is of class $C^2$ in $M\times M$,
then the stochastic process $w$ is MS differentiable in $M$.
In this case, for any tangent vector field $V$ in $M$, the
MS derivative $D_V w$ has mean function $D_V \mu$ and covariance
function $D_{V^{(1)}}D_{V^{(2)}}K$, where $V^{(1)}$ and $V^{(2)}$
are  the tangent vector fields in $M\times M$ with
$V^{(1)}_{(x_1,x_2)}=(V_{x_1},0)$ and
$V^{(2)}_{(x_1,x_2)}=(0,V_{x_2})$.
\end{proposition}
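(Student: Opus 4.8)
The plan is to bootstrap from the pointwise result in Proposition \ref{MSDiff.cond} and then identify the mean and covariance of the resulting derivative process by passing limits through expectations. First, MS differentiability in $M$ is immediate: the hypotheses say $\mu$ is differentiable at every $x\in M$ and $K$ is of class $C^2$ at every $(x,x)$, so Proposition \ref{MSDiff.cond} applies at each point, giving MS differentiability at each $x$ with respect to every tangent vector; parts (b) and (c) of the Definition then package this into MS differentiability in $M$, and for a fixed tangent vector field $V$ the family $\{D_{V_x}w : x\in M\}$ is the derivative process $D_V w$.

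For the mean function, I would fix $x$ and a path $\gamma$ with $\gamma(0)=x$ and $\gamma'(0)=V_x$, and note that the difference quotient $D_a=(w(\gamma(a))-w(x))/a$ has expectation $(\mu(\gamma(a))-\mu(x))/a$, which converges to $D_{V_x}\mu=(D_V\mu)(x)$ by the chain rule. Since $D_a\to D_{V_x}w$ in $L^2$ and $L^2$ convergence implies $L^1$ convergence, the expectation of the limit equals the limit of the expectations, giving $\bb{E}[D_{V_x}w]=(D_V\mu)(x)$.

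For the covariance function, I would first reduce to the centered case: replacing $w$ by $w-\mu$ changes $D_V w$ only by the deterministic quantity $D_V\mu$ and hence leaves the covariance unchanged, so I may assume $\mu=0$ and $K(x_1,x_2)=\bb{E}[w(x_1)w(x_2)]$. Fix $x_1,x_2$ and paths $\gamma_1,\gamma_2$ realizing $V_{x_1},V_{x_2}$, and set $D_a^{(1)}=(w(\gamma_1(a))-w(x_1))/a$ and $D_b^{(2)}=(w(\gamma_2(b))-w(x_2))/b$. Exactly as in the previous proof, $\bb{E}[D_a^{(1)}D_b^{(2)}]$ equals the second-order mixed difference quotient of $K$ along $(\gamma_1,\gamma_2)$, and because $K$ is $C^2$ this converges, as $(a,b)\to(0,0)$, to the mixed directional derivative $(D_{V^{(1)}}D_{V^{(2)}}K)(x_1,x_2)$. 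On the other hand, since $D_a^{(1)}\to D_{V_{x_1}}w$ and $D_b^{(2)}\to D_{V_{x_2}}w$ in $L^2$, joint continuity of the inner product $(f,g)\mapsto\bb{E}[fg]$ (via Cauchy--Schwarz) gives $\bb{E}[D_a^{(1)}D_b^{(2)}]\to\bb{E}[D_{V_{x_1}}w\,D_{V_{x_2}}w]$. Equating the two limits identifies the covariance function as $D_{V^{(1)}}D_{V^{(2)}}K$.

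The main obstacle, and essentially the only place requiring genuine care, is that the relevant limit is a joint limit $(a,b)\to(0,0)$ rather than an iterated one. Two things must then be checked: that the mixed second difference quotient of a $C^2$ function converges to the mixed second partial (which follows from two applications of the mean value theorem together with continuity of the second derivative), and that $\bb{E}[D_a^{(1)}D_b^{(2)}]\to\bb{E}[D_{V_{x_1}}w\,D_{V_{x_2}}w]$ under the joint limit, for which I would bound the difference by $\|D_a^{(1)}-D_{V_{x_1}}w\|_{L^2}\|D_b^{(2)}\|_{L^2}+\|D_{V_{x_1}}w\|_{L^2}\|D_b^{(2)}-D_{V_{x_2}}w\|_{L^2}$ and use that $\|D_b^{(2)}\|_{L^2}$ stays bounded as it converges. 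Everything else is routine, and the independence of the construction from the choice of $\gamma_1,\gamma_2$ is already guaranteed by the Remark following the Definition.
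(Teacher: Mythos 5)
Your proposal is correct and follows essentially the same route as the paper: pointwise MS differentiability via Proposition \ref{MSDiff.cond}, the mean function identified by passing the $L^2$ limit of the difference quotient through the expectation, and the covariance identified by equating the joint limit of $\bb{E}[D_a^{(1)}D_b^{(2)}]$ (the mixed second difference quotient of $K$ along the two paths) with $\bb{E}[(D_Vw)(x_1)\,(D_Vw)(x_2)]$. The only divergence is cosmetic: the paper first establishes the intermediate cross identity $\bb{E}[(D_V\tilde w)(x_1)\,\tilde w(x_2)]=(D_{V^{(1)}}K)(x_1,x_2)$ and then expands a product of two $L^2$-vanishing deviations, whereas you pass the joint limit directly through the inner product via Cauchy--Schwarz and bilinearity --- a slightly leaner bookkeeping of the same estimate, and you are in fact more explicit than the paper about why the mixed difference quotient of a $C^2$ kernel converges under the joint limit $(a,b)\rightarrow(0,0)$.
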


\begin{proof}
The first statement is immediate from Proposition \ref{MSDiff.cond}.
For $i=1,2$, let $x_i\in M$ and
$\gamma_i:(-\epsilon,\epsilon)\rightarrow M$ be a smooth path with
$\gamma_i(0)=x_i$ and $\gamma'_i(0)=V_{x_i}$.
By the Cauchy-Schwarz inequality and the MS differentiability
of $w$, we have
\begin{align*}
&\bb{E}\left[\left(
  (D_V w)(x_1)-\frac{w(\gamma_1(a))-w(x_1)}{a}
  \right)\right]
  \rightarrow 0,\quad
  \t{as }a\rightarrow 0 \\
\iff\quad
&\bb{E}[(D_V w)(x_1)]-\frac{\mu(\gamma_1(a))-\mu(x_1)}{a}\rightarrow 0,
\quad\t{as }a\rightarrow 0
\end{align*}
so that $\bb{E}[(D_V w)(x_1)]=(D_V\mu)(x_1)$.
Now let $\tilde w=w-\mu$.
Similarly as above, we have
\begin{align*}
&\bb{E}\left[\left(
  (D_V \tilde w)(x_1)-\frac{\tilde w(\gamma_1(a))-\tilde w(x_1)}{a}
  \right) \tilde w(x_2) \right]
  \rightarrow 0,\quad
  \t{as }a\rightarrow 0 \\
\iff\quad
&\bb{E}[(D_V \tilde w)(x_1)\, \tilde w(x_2)]
  -\frac{K(\gamma_1(a),x_2)-K(x_1,x_2)}{a}\rightarrow 0,
\quad\t{as }a\rightarrow 0
\end{align*}
so that $\bb{E}[(D_V \tilde w)(x_1)\, \tilde w(x_2)]=(D_{V^{(1)}}K)(x_1,x_2)$.
Similarly again, we also have
\begin{align*}
&\bb{E}\left[
  \left((D_V \tilde w)(x_1)-\frac{\tilde w(\gamma_1(a))-\tilde w(x_1)}{a}\right)
  \left((D_V \tilde w)(x_2)-\frac{\tilde w(\gamma_2(b))-\tilde w(x_2)}{b}\right)
  \right]\rightarrow 0 \\
\iff\quad
&\bb{E}[(D_V \tilde w)(x_1)\,(D_V \tilde w)(x_2)]
  -\frac{K(\gamma_1(a),\gamma_2(b))
    -K(\gamma_1(a),x_2)-K(x_1,\gamma_2(b))+K(x_1,x_2)}{ab}
  \rightarrow 0
\end{align*}
as $(a,b)\rightarrow(0,0)$, which means
\begin{align*}
\bb{E}&[(D_V \tilde w)(x_1)\,(D_V \tilde w)(x_2)]  \\
&=(D_{V^{(2)}}D_{V^{(1)}}K)(x_1,x_2)
+(D_{V^{(1)}}D_{V^{(2)}}K)(x_1,x_2)
-(D_{V^{(1)}}D_{V^{(2)}}K)(x_1,x_2) \\
&=(D_{V^{(1)}}D_{V^{(2)}}K)(x_1,x_2).
\end{align*}
This completes the proof.
\end{proof}

\begin{corollary} \label{nMSDiff.cond}
If $\mu$ is of class $C^n$ and $K$ is of class $C^{2n}$,
then $w$ is $n$-times MS differentiable.
\end{corollary}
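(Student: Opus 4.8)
The plan is to prove the corollary by induction on $n$, using Proposition \ref{MSDer.mean.cov} as both the base case and the engine of the inductive step. First I would make precise what ``$n$-times MS differentiable'' should mean by unwinding the recursive structure implicit in the definition: declare $w$ to be \emph{$1$-times MS differentiable} if it is MS differentiable in the sense of part (c) of the definition, and declare $w$ to be \emph{$n$-times MS differentiable} if $w$ is MS differentiable and, for every smooth tangent vector field $V$ on $M$, the MS derivative $D_V w$ is itself $(n-1)$-times MS differentiable. With this reading the corollary becomes a clean statement amenable to induction.

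For the base case $n=1$, the hypotheses read $\mu\in C^1$ and $K\in C^2$; in particular $\mu$ is differentiable and $K$ is of class $C^2$ on $M\times M$, so Proposition \ref{MSDer.mean.cov} gives exactly that $w$ is MS differentiable, i.e.\ $1$-times MS differentiable.

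For the inductive step I would assume the statement for $n-1$ and suppose $\mu\in C^n$ and $K\in C^{2n}$ with $n\geq 2$. Since $n\geq 2$, these hypotheses in particular yield $\mu$ differentiable and $K$ of class $C^2$, so Proposition \ref{MSDer.mean.cov} applies: $w$ is MS differentiable, and for every smooth vector field $V$ the process $D_V w$ has mean function $D_V\mu$ and covariance function $D_{V^{(1)}}D_{V^{(2)}}K$. The crucial bookkeeping is to observe that differentiating along a smooth vector field lowers the differentiability class by exactly one. Since $\mu$ is $C^n$, the derivative $D_V\mu$ is $C^{n-1}$; and since $K$ is $C^{2n}$ on $M\times M$ and $V^{(1)},V^{(2)}$ are the smooth lifted vector fields on $M\times M$, the twice-differentiated covariance $D_{V^{(1)}}D_{V^{(2)}}K$ is $C^{2n-2}=C^{2(n-1)}$. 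Hence $D_V w$ is a stochastic process whose mean is $C^{n-1}$ and whose covariance is $C^{2(n-1)}$, so the inductive hypothesis shows $D_V w$ is $(n-1)$-times MS differentiable. As $V$ was an arbitrary smooth vector field, $w$ is $n$-times MS differentiable by the recursive definition, completing the induction.

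The only genuine obstacle is the smoothness accounting rather than any analytic difficulty: one must verify that differentiating a $C^k$ function along a smooth (say $C^\infty$) vector field on a manifold yields a $C^{k-1}$ function, and then apply this both on $M$ for $\mu$ and on $M\times M$ for $K$ with the lifted fields $V^{(1)},V^{(2)}$. This is standard once vector fields are taken smooth, and it is precisely the reason the hypothesis on $K$ is $C^{2n}$ (two derivatives consumed per MS derivative, over $n$ MS derivatives) while that on $\mu$ is only $C^{n}$ (one derivative per MS derivative).
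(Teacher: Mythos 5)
Your proof is correct and matches the paper's approach: the paper's entire proof is ``Repeatedly apply Proposition \ref{MSDer.mean.cov},'' which is precisely the induction you carry out, with the smoothness bookkeeping (mean $D_V\mu\in C^{n-1}$, covariance $D_{V^{(1)}}D_{V^{(2)}}K\in C^{2(n-1)}$) spelled out explicitly where the paper leaves it implicit.
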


\begin{proof}
Repeatedly apply Proposition \ref{MSDer.mean.cov}.
\end{proof}

\begin{example}
Suppose $J:M\rightarrow\R^D$ is an embedding of $M$ into
a (higher-dimensional) Euclidean space $\R^D$.
Given a stochastic process $w$ in $\R^D$, we can pull it
back to a stochastic process $J^*w$ in $M$, with
\begin{align*}
(J^*w)(x)=w(J(x)),\quad\t{for }x\in M.
\end{align*}
Clearly, if the mean and covariance functions of $w$ are $\mu$
and $K$, then the mean and covariance functions of $J^*f$
are $J^*\mu$ and $(J\times J)^*K$.
Also, if $\mu$ is $C^n$, $K$ is $C^{2n}$ and $J$ is $C^{2n}$ as well,
then $J^*\mu$ is $C^n$ and $(J\times J)^*K$ is $C^{2n}$;
and hence by Corollary \ref{nMSDiff.cond}, $J^*w$ is $n$-times MS
differentiable.

For example, if $w$ is a Gaussian process in $\R^D$ with
a Mat\'ern-$\nu$ covariance function (and zero mean), then
$J^*w$ is an $\lfloor\frac{\nu-1}{2}\rfloor$-times MS
differentiable Gaussian process in $M$;
and if $w$ is a Gaussian process in $\R^D$ with
a squared-exponential covariance function, then $J^*w$ is
an infinitely MS differentiable Gaussian process in $M$.
\end{example}
%

\subsection{Posterior contraction rates of eGPs}

In this short subsection, we explore the posterior contraction rates of a regression model on a manifold with eGP  as the prior for the regression function.  Posterior contraction rates measure how fast the posterior concentrates in small neighborhoods of the true regression function, providing frequentist asymptotic guarantees on the behavior of the eGP posterior.  Given data $(x_i,y_i)$ with $x_i\in M$ and $y_i\in \R$ ($i=1,\ldots, n$),  assume the  regression model \eqref{eq-model1} where $y_i=F(x_i)+\epsilon_i$,  $x_i\in M$ and $\epsilon_i\sim N(0,\sigma^2)$. The prior distribution   $\Pi (F)$ will be given by the eGP with the covariance kernel \eqref{eq-extrinsiker2} (with a fixed magnitude). The length-scale parameter $\beta$ is assumed a prior  $\pi_{\beta}$  such that $\beta^d$ follows a gamma distribution $\text{Gamma}(a_0, b_0)$, where $d$ is the dimension of manifold.
For simplicity in exposition, assume $\sigma$ is known though the results are straightforward to generalize to unknown $\sigma$.  The posterior distribution of $F$ is then given by
\begin{align}
\Pi\left(U\mid (x_1,y_1),\ldots, (x_n,y_n)\right)=\dfrac{\int_{U} \prod_{i=1}^n N(y_i;F(x_i),\sigma^2)\Pi(dF)}{\int \prod_{i=1}^n N(y_i;F(x_i),\sigma^2)\Pi(dF)}
\end{align} 
where $U$ is a measurable set in the space of regression functions.  Let $F_0$ be the true regression function. We say the eGP posterior \emph{contracts to $F_0$ at a rate of $\epsilon_n$} if
\begin{align}
\Pi\left(U_{\epsilon_n}(F_0)^C\mid (x_1,y_1),\ldots, (x_n,y_n)\right)\rightarrow 0, \; a.s. P_{F_0}^n, 
\end{align} 
where $U_{\epsilon_n}(F_0)^C=\{F: d_{ \mathcal{M}}(F,F_0)> C\epsilon_n\}$, as $n\rightarrow \infty$ for some large constant $C$ and distance  $d_{ \mathcal{M}}$.
We have the following proposition.
\begin{proposition}
\label{prop-contract}
Assume the regression model \eqref{eq-model1} with an eGP prior with covariance kernel \eqref{eq-extrinsiker2}, the following holds.
\begin{itemize}
\item[(a)] Assume $M$ is a smooth manifold and the covariates are from a fixed design.  Let $F_0\in C^s(M)$ ($s\leq 2$), the $s$-H\"older smooth class of functions on $M$,  then the posterior distribution of eGP contracts to the true regression function  $F_0$ at a rate of $\epsilon_n=n^{-s/(2s+d)}(\log n)^{d+1}$ with $d_{ \mathcal{M}}(F, F_0)=\frac{1}{n}\sum_{i=1}^n |F(x_i)-F_0(x_i)|.$

\item[(b)] Assume $M$ is a smooth manifold and the covariates are from a random design with $x_i\sim g(\cdot)$, $i=1,\ldots,n$, for some distribution $g(\cdot)$ on $M$. Then the results in part (a) hold with 
$U_{\epsilon_n}(F_0)^C=\{F: \int_{x\in M} (F_A(x)-F_0(x))^2g(dx)<\epsilon_n\}$, where $F_A(x)=(f\vee (-A))\wedge A$, for some $A$ large enough.
\end{itemize}

\end{proposition}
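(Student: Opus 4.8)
The plan is to reduce the statement to the posterior contraction theorem of \cite{yang2016} for Gaussian process regression on a submanifold of Euclidean space, exploiting the fact that the extrinsic squared-exponential kernel is nothing but an \emph{ambient} squared-exponential kernel restricted to the embedded image $\widetilde M=J(M)$. First I would make this reduction precise. Writing $\widetilde K(z_1,z_2)=\alpha\exp(-\beta\|z_1-z_2\|^2)$ for the ordinary squared-exponential kernel on $\R^D$, the extrinsic kernel \eqref{eq-extrinsiker2} satisfies $K_{\t{ext}}(x_1,x_2)=\widetilde K(J(x_1),J(x_2))$; hence, as in the Example above, the eGP prior $\Pi(F)$ is exactly the law of $\widetilde W\circ J$, where $\widetilde W$ is the squared-exponential process on $\R^D$ with random length-scale $\beta$. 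Setting $\tilde x_i=J(x_i)\in\widetilde M$ and $\widetilde F=F\circ J^{-1}$, the regression model \eqref{eq-model1} becomes $y_i=\widetilde F(\tilde x_i)+\epsilon_i$ on the compact $d$-dimensional submanifold $\widetilde M\subset\R^D$, and under the prior $\widetilde F$ is distributed as $\widetilde W$ restricted to $\widetilde M$, with $\beta^d\sim\mathrm{Gamma}(a_0,b_0)$. This is precisely the setting of \cite{yang2016}.

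Next I would transfer the regularity hypotheses across $J$. Since $J$ is a smooth embedding, hence a diffeomorphism onto its image, the pushforward $\widetilde F_0=F_0\circ J^{-1}$ lies in $C^s(\widetilde M)$ whenever $F_0\in C^s(M)$, and the empirical distance is preserved, $\frac1n\sum_i|F(x_i)-F_0(x_i)|=\frac1n\sum_i|\widetilde F(\tilde x_i)-\widetilde F_0(\tilde x_i)|$. One must also check that $\widetilde M$ meets the geometric conditions entering the metric-entropy and small-ball (concentration-function) estimates of \cite{yang2016}: compactness, smoothness, and the bounded-curvature/positive-reach type regularity. For each manifold of Section 3, the image under the stated equivariant embedding is a smooth compact embedded submanifold, so these conditions hold.

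For part (a) I would then invoke the fixed-design contraction theorem of \cite{yang2016} directly on $\widetilde M$, obtaining contraction at rate $\epsilon_n=n^{-s/(2s+d)}(\log n)^{d+1}$ in the empirical $L^1$ distance, where $d=\dim\widetilde M=\dim M$; the identity of empirical distances above pulls this back to the claim on $M$ with $d_{\mathcal M}$. For part (b), the random design $x_i\sim g$ on $M$ pushes forward to $\tilde x_i\sim g\circ J^{-1}$ on $\widetilde M$, and the random-design version of \cite{yang2016}, stated in the truncated integrated $L^2$ distance, translates back via the change of variables $z=J(x)$ to the neighborhood $\{F:\int_{M}(F_A(x)-F_0(x))^2\,g(dx)<\epsilon_n\}$ with $F_A=(F\vee(-A))\wedge A$.

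The main obstacle I anticipate is not the probabilistic core, which is inherited wholesale from \cite{yang2016}, but the geometric bookkeeping: confirming that every embedded image $\widetilde M$ produced by the equivariant embeddings of Section 3 satisfies the smoothness and curvature/reach regularity their entropy and concentration-function bounds require, and verifying that the Hölder class $C^s(M)$ transfers exactly to $C^s(\widetilde M)$ under $J$ throughout the range $s\le 2$, with the ceiling $s\le 2$ inherited from the corresponding hypothesis in \cite{yang2016}. Once these verifications are in place, the stated rate follows from \cite{yang2016} essentially verbatim.
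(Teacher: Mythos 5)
Your proposal follows essentially the same route as the paper's own proof: transport the regression problem to the embedded image $\widetilde M = J(M)$ via $\tilde x_i = J(x_i)$ and $\widetilde F = F\circ J^{-1}$, invoke Theorem 2.1 of \cite{yang2016} on the $d$-dimensional submanifold $\widetilde M\subset\R^D$, and pull the contraction statement back through the bijection $F\leftrightarrow\widetilde F$, using the invariance of the empirical distance in part (a) and the pushforward density $\tilde g$ in part (b). Your additional verifications (H\"older-class transfer under the diffeomorphism $J$, and the compactness/regularity conditions on $\widetilde M$ needed for the entropy and concentration-function bounds of \cite{yang2016}) are points the paper leaves implicit, so your write-up is, if anything, slightly more careful than the original.
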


\begin{proof}

(a) Given the embedding $J: M\rightarrow \R^D$, $\tilde{M}=J(M)$ is a $d$-dimensional submanifold of $\R^D$. Any  function $F\in \mathcal{M}$ on $M$ induces a function $\tilde F=F\circ J^{-1}$ on $\tilde M$.  One has
\begin{align*}
y_i=\tilde F(\tilde x_i)+\epsilon_i,
\end{align*}
where $\tilde x_i=J(x_i)\in\tilde{M}$. Then by Theorem 2.1 of \cite{yang2016}, one has
\begin{align*}
\Pi\left(\tilde U_{\epsilon_n}(\tilde F_0)^C\mid (\tilde x_1,y_1),\ldots, (\tilde x_n,y_n)\right)\rightarrow 0
\end{align*}
where $\tilde U_{\epsilon_n}(F_0)=\{\tilde F : \frac{1}{n}\sum_{i=1}^n\mid  \tilde F(\tilde x_i)-\tilde F_0(\tilde x_i)\mid<\epsilon_n\}.$ There is a one-to-one correspondence (a bijection) between $\tilde F$ and $F$, and one has
$ U_{\epsilon_n}(F_0)=\{F: \frac{1}{n}\sum_{i=1}^n |F(x_i)-F_0(x_i)|=\frac{1}{n}\sum_{i=1}^n\mid  \tilde F(\tilde x_i)-\tilde F_0(\tilde x_i)\mid<\epsilon_n\}.$ Then
\begin{align*}
\Pi\left(U_{\epsilon_n}(F_0)^C\mid ( x_1,y_1),\ldots, ( x_n,y_n)\right)\rightarrow 0,
\end{align*}
where $\epsilon_n$ is given in part (a). 

(b) Similar proofs follow from part (a) noting that there is one-to-one correspondence between
$\{\tilde F: \int_{\tilde M}(\tilde F(\tilde x)-\tilde F_0(\tilde x))^2\tilde g(dx)<\epsilon_n\}$ and 
$\{F: \int_M (F(x)-F_0(x))^2g(x)dx<\epsilon_n\}$, where $\tilde g(x)$ is the density on $\tilde M$ induced by the embedding $J$ and the density $g(x)$ on $M$. 
\end{proof}

\section{Discussion and conclusion}

We propose a general extrinsic framework for constructing Gaussian processes  on manifolds for regression and classification with manifold-valued predictors.  Such models are general, easy to implement and shown to inherit good properties from Gaussian processes on Euclidean spaces.  Applications are considered by applying eGP models to regression and classification problems with predictors on a large class of manifolds ranging from spheres, landmark-based shapes spaces, to the spaces of positive definite matrices and Grassmannians. Our work will likely help practitioners  make more accurate predictions or diagnoses  based on medical imaging. Although the work focuses on regression and classification, the eGPs can be used in much broader settings  such as in exponential family models for the response $y_i$ given $x_i$,  which allows Poisson regression etc.  In addition, eGPs can be certainly used for spatial modeling where the spatial space is some geometric space such as the sphere and other geometric spaces.  Future work will be devoted to constructing applicable covariance kernels employing the intrinsic Riemannian geometry of manifolds, which are only available now for a very limited class of manifolds, and also constructing valid GP models for spaces  beyond manifolds such as  stratified spaces of interests.


\section*{Acknowledgment}

We  thank Professor Hongtu Zhu for providing us the diffusion tensor imaging data used  in Section 3. The contribution of LL is funded by  NSF grants IIS1663870 and Career 1654579.

\bibliographystyle{biom}
\bibliography{refs-LL}

\begin{thebibliography}{}

\bibitem[\protect\citeauthoryear{Alexander, Lee, Lazar, and Field}{Alexander
  et~al.}{2007}]{dti-ref}
Alexander, A., Lee, J.~E., Lazar, M., and Field, A.~S. (2007).
\newblock Diffusion tensor imaging of the brain.
\newblock {\em Neurotherapeutics} {\bf 4(3),} 316–--329.

\bibitem[\protect\citeauthoryear{Bartsch}{Bartsch}{2012}]{hippo}
Bartsch, T. (2012).
\newblock {\em The {C}linical {N}eurobiology of the {H}ippocampus: {A}n
  {I}ntegrative {V}iew}.
\newblock OUP Oxford.

\bibitem[\protect\citeauthoryear{Bhattacharya and Bhattacharya}{Bhattacharya
  and Bhattacharya}{2012}]{rabimono}
Bhattacharya, A. and Bhattacharya, R. (2012).
\newblock {\em Nonparametric Inference on Manifolds: with Applications to Shape
  Spaces}.
\newblock Cambridge University Press.
\newblock IMS monographs \#2.

\bibitem[\protect\citeauthoryear{Bhattacharya and Dunson}{Bhattacharya and
  Dunson}{2010a}]{david-regression}
Bhattacharya, A. and Dunson, D. (2010a).
\newblock Nonparametric {B}ayes regression and classification through mixtures
  of product kernels.
\newblock {\em Bayesian Analysis} {\bf 9,} 145--164.

\bibitem[\protect\citeauthoryear{Bhattacharya and Dunson}{Bhattacharya and
  Dunson}{2010b}]{Bhattacharya01122010}
Bhattacharya, A. and Dunson, D.~B. (2010b).
\newblock Nonparametric {B}ayesian density estimation on manifolds with
  applications to planar shapes.
\newblock {\em Biometrika} {\bf 97,} 851--865.

\bibitem[\protect\citeauthoryear{{Bhattacharya} and {Lin}}{{Bhattacharya} and
  {Lin}}{2017}]{linclt}
{Bhattacharya}, R. and {Lin}, L. (2017).
\newblock {Omnibus CLTs for {F}r\'echet means and nonparametric inference on
  non-Euclidean spaces}.
\newblock {\em The Proceedings of the American Mathematical Society} {\bf 145,}
  413--428.

\bibitem[\protect\citeauthoryear{Bhattacharya and Patrangenaru}{Bhattacharya
  and Patrangenaru}{2005}]{Bhattacharya2005}
Bhattacharya, R. and Patrangenaru, V. (2005).
\newblock Large sample theory of intrinsic and extrinsic sample means on
  manifolds. {II}.
\newblock {\em The Annals of Statistics} {\bf 33,} 1225--1259.

\bibitem[\protect\citeauthoryear{Bhattacharya and Patrangenaru}{Bhattacharya
  and Patrangenaru}{2003}]{rabi2003}
Bhattacharya, R.~N. and Patrangenaru, V. (2003).
\newblock Large sample theory of intrinsic and extrinsic sample means on
  manifolds.
\newblock {\em The Annals of Statistics} {\bf 31,} 1--29.

\bibitem[\protect\citeauthoryear{Bookstein}{Bookstein}{1978}]{books1}
Bookstein, F. (1978).
\newblock {\em The Measurement of Biological Shape and Shape Change}.
\newblock Lecture Notes in Biomathematics, Springer, Berlin.

\bibitem[\protect\citeauthoryear{Castillo, Kerkyacharian, and Picard}{Castillo
  et~al.}{2014}]{casmillo}
Castillo, I., Kerkyacharian, G., and Picard, D. (2014).
\newblock Thomas {B}ayes's walk on manifolds.
\newblock {\em Probability Theory and Related Fields} {\bf 158,} 665--710.

\bibitem[\protect\citeauthoryear{Cheng and Wu}{Cheng and Wu}{2013}]{yen13}
Cheng, M. and Wu, H. (2013).
\newblock Local linear regression on manifolds and its geometric
  interpretation.
\newblock {\em Journal of the American Statistical Association} {\bf 108,}
  1421--1434.

\bibitem[\protect\citeauthoryear{Chikuse}{Chikuse}{2003}]{chikusebook}
Chikuse, Y. (2003).
\newblock {\em Statistics on Special Manifolds}.
\newblock Springer, New York.

\bibitem[\protect\citeauthoryear{Downs, Liebman, and Mackay}{Downs
  et~al.}{1971}]{vecdata}
Downs, T., Liebman, J., and Mackay, W. (1971).
\newblock Statistical methods for vectorcardiogram orientations.
\newblock {\em In Vectorcardiography 2: Proc. XIth International Symposium on
  Vectorcardiography} pages 216--222.
\newblock North-Holland, Amsterdam.

\bibitem[\protect\citeauthoryear{Dryden and Mardia}{Dryden and
  Mardia}{1998}]{dimk}
Dryden, I.~L. and Mardia, K.~V. (1998).
\newblock {\em Statistical Shape Analysis}.
\newblock Wiley, New York.

\bibitem[\protect\citeauthoryear{Du, Ma, and Li}{Du et~al.}{2013}]{Du2013}
Du, J., Ma, C., and Li, Y. (2013).
\newblock Isotropic variogram matrix functions on spheres.
\newblock {\em Mathematical Geosciences} {\bf 45,} 341--357.

\bibitem[\protect\citeauthoryear{Duane, Kennedy, Pendleton, and Roweth}{Duane
  et~al.}{1987}]{duane1987}
Duane, S., Kennedy, A.~D., Pendleton, B.~J., and Roweth, D. (1987).
\newblock Hybrid {M}onte {C}arlo.
\newblock {\em Physics letters B} {\bf 195,} 216--222.

\bibitem[\protect\citeauthoryear{Gneiting}{Gneiting}{2013}]{gneiting2013}
Gneiting, T. (2013).
\newblock Strictly and non-strictly positive definite functions on spheres.
\newblock {\em Bernoulli} {\bf 19,} 1327--1349.

\bibitem[\protect\citeauthoryear{Guinness and Fuentes}{Guinness and
  Fuentes}{2016}]{guinness2016isotropic}
Guinness, J. and Fuentes, M. (2016).
\newblock Isotropic covariance functions on spheres: Some properties and
  modeling considerations.
\newblock {\em Journal of Multivariate Analysis} {\bf 143,} 143--152.

\bibitem[\protect\citeauthoryear{Hitczenko and Stein}{Hitczenko and
  Stein}{2012}]{Hitczenko2012211}
Hitczenko, M. and Stein, M. (2012).
\newblock Some theory for anisotropic processes on the sphere.
\newblock {\em Statistical Methodology} {\bf 9,} 211 -- 227.
\newblock Special Issue on Astrostatistics + Special Issue on Spatial
  Statistics.

\bibitem[\protect\citeauthoryear{Ho, Lee, Yang, and Kriegman}{Ho
  et~al.}{2004}]{subspacepaper}
Ho, J., Lee, K.-C., Yang, M.-H., and Kriegman, D. (2004).
\newblock Visual tracking using learned linear subspaces.
\newblock In {\em CVPR 2004. Proceedings of the 2004 IEEE Computer Society
  Conference on}, volume~1, pages 782--789.

\bibitem[\protect\citeauthoryear{Huang, Zhang, and Robeson}{Huang
  et~al.}{2011}]{chunfeng}
Huang, C., Zhang, H., and Robeson, S. (2011).
\newblock On the validity of commonly used covariance and variogram functions
  on the sphere.
\newblock {\em Mathematical Geosciences} {\bf 43,} 721--733.

\bibitem[\protect\citeauthoryear{Jun and Stein}{Jun and Stein}{2008}]{jun2008}
Jun, M. and Stein, M.~L. (2008).
\newblock Nonstationary covariance models for global data.
\newblock {\em The Annals of Applied Statistics} {\bf 2,} 1271--1289.

\bibitem[\protect\citeauthoryear{Kendall}{Kendall}{1977}]{kendall77}
Kendall, D.~G. (1977).
\newblock The diffusion of shape.
\newblock {\em Advances in Applied Probability} {\bf 9,} 428--430.

\bibitem[\protect\citeauthoryear{Kendall}{Kendall}{1984}]{kendall84}
Kendall, D.~G. (1984).
\newblock Shape manifolds, procrustean metrics, and complex projective spaces.
\newblock {\em Bulletin of the London Mathematical Society} {\bf 16,} 81--121.

\bibitem[\protect\citeauthoryear{Kutyniok, Pezeshki, Calderbank, and
  Liu}{Kutyniok et~al.}{2009}]{Kutyniok200964}
Kutyniok, G., Pezeshki, A., Calderbank, R., and Liu, T. (2009).
\newblock Robust dimension reduction, fusion frames, and grassmannian packings.
\newblock {\em Applied and Computational Harmonic Analysis} {\bf 26,} 64 -- 76.

\bibitem[\protect\citeauthoryear{{Lin}, {Rao}, and {Dunson}}{{Lin}
  et~al.}{2017}]{sinica-paper}
{Lin}, L., {Rao}, V., and {Dunson}, D.~B. (2017).
\newblock {Bayesian nonparametric inference on the Stiefel manifold}.
\newblock {\em Statistics Sinica} {\bf 27,} 535--553.

\bibitem[\protect\citeauthoryear{Lin, Thomas, Zhu, and Dunson}{Lin
  et~al.}{2016}]{ext2015}
Lin, L., Thomas, B.~S., Zhu, H., and Dunson, D.~B. (2016).
\newblock Extrinsic local regression on manifold-valued data.
\newblock {\em Journal of the American Statistical Association} {\bf 0,} 1--13.

\bibitem[\protect\citeauthoryear{Neal}{Neal}{2012}]{neal2012}
Neal, R.~M. (2012).
\newblock {\em Bayesian learning for neural networks}, volume 118.
\newblock Springer Science \& Business Media.

\bibitem[\protect\citeauthoryear{Pelletier}{Pelletier}{2005}]{Pelletier2005297}
Pelletier, B. (2005).
\newblock Kernel density estimation on riemannian manifolds.
\newblock {\em Statistics and Probability Letters} {\bf 73,} 297 -- 304.

\bibitem[\protect\citeauthoryear{Rasmussen}{Rasmussen}{2004}]{Rasmussen2004}
Rasmussen, C.~E. (2004).
\newblock Gaussian processes in machine learning.
\newblock {\em Advanced Lectures on Machine Learning} pages 63--71.

\bibitem[\protect\citeauthoryear{Rasmussen and Williams}{Rasmussen and
  Williams}{2005}]{rasmussen_gaussian_2005}
Rasmussen, C.~E. and Williams, C. K.~I. (2005).
\newblock {\em Gaussian Processes for Machine Learning (Adaptive Computation
  and Machine Learning)}.
\newblock The {MIT} Press.

\bibitem[\protect\citeauthoryear{{St.~Thomas}, {Lin}, {Lim}, and
  {Mukherjee}}{{St.~Thomas} et~al.}{2014}]{brian2014}
{St.~Thomas}, B., {Lin}, L., {Lim}, L.-H., and {Mukherjee}, S. (2014).
\newblock {Learning subspaces of different dimension}.
\newblock {\em ArXiv e-prints} {\bf 1404.6841,}.

\bibitem[\protect\citeauthoryear{Teja and Ravi}{Teja and
  Ravi}{2012}]{facialsubpace}
Teja, G. and Ravi, S. (2012).
\newblock Face recognition using subspaces techniques.
\newblock In {\em Recent Trends In Information Technology (ICRTIT), 2012
  International Conference on}, pages 103--107.

\bibitem[\protect\citeauthoryear{van~der Vaart and van Zanten}{van~der Vaart
  and van Zanten}{2008}]{vandervaart2008}
van~der Vaart, A.~W. and van Zanten, J.~H. (2008).
\newblock Rates of contraction of posterior distributions based on gaussian
  process priors.
\newblock {\em The Annals of Statistics} {\bf 36,} 1435--1463.

\bibitem[\protect\citeauthoryear{van~der Vaart and van Zanten}{van~der Vaart
  and van Zanten}{2009}]{vandervaart2009}
van~der Vaart, A.~W. and van Zanten, J.~H. (2009).
\newblock Adaptive bayesian estimation using a gaussian random field with
  inverse gamma bandwidth.
\newblock {\em The Annals of Statistics} {\bf 37,} 2655--2675.

\bibitem[\protect\citeauthoryear{Williams and Barber}{Williams and
  Barber}{1998}]{williams1998}
Williams, C.~K. and Barber, D. (1998).
\newblock Bayesian classification with gaussian processes.
\newblock {\em IEEE Transactions on Pattern Analysis and Machine Intelligence}
  {\bf 20,} 1342--1351.

\bibitem[\protect\citeauthoryear{Williams and Rasmussen}{Williams and
  Rasmussen}{1996}]{williams1996}
Williams, C.~K. and Rasmussen, C.~E. (1996).
\newblock Gaussian processes for regression.
\newblock {\em Advances in neural information processing systems 8} pages
  514--520.

\bibitem[\protect\citeauthoryear{Yang and Dunson}{Yang and
  Dunson}{2016}]{yang2016}
Yang, Y. and Dunson, D.~B. (2016).
\newblock Bayesian manifold regression.
\newblock {\em The Annals of Statistics} {\bf 44,} 876--905.

\bibitem[\protect\citeauthoryear{Yuan, Zhu, Lin, and Marron}{Yuan
  et~al.}{2012}]{Yuan2012}
Yuan, Y., Zhu, H., Lin, W., and Marron, J.~S. (2012).
\newblock Local polynomial regression for symmetric positive definite matrices.
\newblock {\em Journal of the Royal Statistical Society: Series B} {\bf 74,}
  697--719.

\end{thebibliography}

\end{document}